\documentclass{article} 
\usepackage{iclr2024_conference,times}
\usepackage{algorithm}
\usepackage{algpseudocode}
\usepackage{booktabs}
\usepackage{graphicx}

\usepackage{amsmath,amsfonts,bm}









\def\eqref#1{equation~\ref{#1}}









\def\1{\bm{1}}


\def\ra{{\textnormal{a}}}


\def\rx{{\textnormal{x}}}


\def\rva{{\mathbf{a}}}

\def\erva{{\textnormal{a}}}

\def\ervx{{\textnormal{x}}}

\def\rmA{{\mathbf{A}}}



\def\vmu{{\bm{\mu}}}
\def\vtheta{{\bm{\theta}}}
\def\va{{\bm{a}}}

\def\ve{{\bm{e}}}

\def\vx{{\bm{x}}}


\def\eva{{a}}

\def\mA{{\bm{A}}}

\def\mH{{\bm{H}}}
\def\mI{{\bm{I}}}
\def\mJ{{\bm{J}}}

\def\mX{{\bm{X}}}

\def\mSigma{{\bm{\Sigma}}}

\DeclareMathAlphabet{\mathsfit}{\encodingdefault}{\sfdefault}{m}{sl}
\SetMathAlphabet{\mathsfit}{bold}{\encodingdefault}{\sfdefault}{bx}{n}
\newcommand{\tens}[1]{\bm{\mathsfit{#1}}}
\def\tA{{\tens{A}}}

\def\tX{{\tens{X}}}


\def\gG{{\mathcal{G}}}

\def\sA{{\mathbb{A}}}
\def\sB{{\mathbb{B}}}


\def\sS{{\mathbb{S}}}


\def\emA{{A}}

\newcommand{\etens}[1]{\mathsfit{#1}}

\def\etA{{\etens{A}}}






\newcommand{\E}{\mathbb{E}}

\newcommand{\R}{\mathbb{R}}

\newcommand{\KL}{D_{\mathrm{KL}}}
\newcommand{\Var}{\mathrm{Var}}

\newcommand{\Cov}{\mathrm{Cov}}

\newcommand{\normltwo}{L^2}
\newcommand{\normlp}{L^p}

\newcommand{\parents}{Pa} 

\usepackage{hyperref}
\usepackage{url}
\usepackage{amsmath}
\usepackage{amsthm}
\newtheorem{property}{Property}
\newtheorem{lemma}{Lemma}
\newtheorem{proposition}{Proposition}
\usepackage{array}
\newcommand{\Input}[1]{\State \textbf{Input:} #1}
\newcommand{\Output}[1]{\State \textbf{Output:} #1}
\newcolumntype{C}[1]{>{\centering\arraybackslash}p{#1}}

\definecolor{lightblue}{rgb}{0,0.58,0.71}

\title{Backdiff: a diffusion model for generalized transferable protein backmapping}


\author{Yikai Liu, Guang Lin  \\
Department of Mechanical Engineering\\
Purdue University\\
West Lafayette, IN 47907, USA \\
\texttt{\{liu3307, guanglin\}@purdue.edu}
\And
Ming Chen \\
Department of Chemistry\\
Purdue University\\
West Lafayette, IN 47907, USA \\
\texttt{\{chen4116\}@purdue.edu} \\
}

%

\iclrfinalcopy 
\begin{document}

\maketitle

\begin{abstract}
Coarse-grained (CG) models play a crucial role in the study of protein structures, protein thermodynamic 
properties, and protein conformation dynamics. Due to the information loss in the coarse-graining process, backmapping from CG to all-atom configurations is essential 
in many protein design and drug discovery applications when detailed atomic representations are needed for in-depth studies. Despite recent progress in data-driven backmapping approaches, devising a backmapping method that can be universally applied across various CG models and proteins remains unresolved. In this work, we propose BackDiff, a new generative model designed to achieve generalization and reliability in the protein backmapping problem. BackDiff leverages the conditional score-based diffusion model with geometric representations. Since different CG models can contain different coarse-grained sites which 
include selected atoms (CG atoms) and simple CG auxiliary functions of atomistic coordinates (CG auxiliary variables), 
we design a self-supervised training framework to adapt to different CG atoms, and constrain the diffusion sampling paths with arbitrary CG auxiliary variables as conditions. Our method facilitates end-to-end training and allows efficient sampling across different proteins and diverse CG models without the need for retraining. Comprehensive experiments over multiple popular CG models demonstrate BackDiff's superior performance to existing state-of-the-art approaches, and generalization and flexibility that these approaches cannot achieve. A pretrained BackDiff model can offer a convenient yet reliable plug-and-play solution for protein researchers, enabling them to investigate further from their own CG models.

\end{abstract}

\section{INTRODUCTION}
All-atom molecular dynamics (MD) simulations provide detailed insights into the atomic-level interactions and dynamics of proteins (\cite{ciccotti2014molecular}). However, the computational cost associated with these simulations is substantial, especially when considering large biological systems or long simulation timescales (\cite{shaw2010atomic}). The intricacies of atomic interactions necessitate small time steps and slow atomic force evaluations, making it challenging to model slow biological processes, such as protein folding, protein-protein interaction, and protein aggregation. Coarse-grained (CG) simulations emerge as an essential tool to address these challenges (\cite{kmiecik2016coarse,marrink2007martini,rudd1998coarse}). 
By simplifying and grouping atoms into larger interaction units, CG models significantly reduce degrees of freedom, allowing for larger simulation length- and time-scales. A CG representation can be classified into two components: CG atoms and CG auxiliary variables. CG atoms denote those that are direct all-atom particles, meaning that each CG atom corresponds to an atom in the all-atom configuration. On the other hand, CG auxiliary variables function as mathematical representations to capture aggregate properties of groups of atoms. While many traditional physics-based CG models use the side chain center of mass (COM) as their CG auxiliary variables, recent CG models can adapt optimized nonlinear (\cite{diggins2018optimal} or data-driven CG auxiliary variables (\cite{fu2022simulate}) to give a more comprehensive description of proteins.

\par However, the coarse-graining process sacrifices atomic-level precision, which, in many cases, is essential for a comprehensive understanding of the biomolecular system, such as the drug binding process. 
Thus, retrieving all-atom configurations by backmapping the CG configurations is important for a more detailed and accurate modeling of proteins. 
Traditional simulation-based backmapping methods, which perform by equilibrating configurations through MC or MD simulation (\cite{badaczewska2020computational,vickery2021cg2at2,liu2008reconstructing}), are computationally expensive and highly intricate, thus diminishing the value of CG simulations. In response, recent data-driven methods (\cite{li2020backmapping,louison2021glimps,an2020machine}) employ generative models for more efficient and accurate protein backmapping.  These models learn the probability distribution of all-atom configurations conditioned on CG structures, and can efficiently sample from the distribution.  \cite{yang2023chemically} extends the generative backmapping by aiming for transferability in protein spaces. While these methods have shown promising results for backmapping various proteins, they often train and sample under a single, predefined CG model. \cite{wang2022generative} illustrates that the method can be adapted to CG models with different resolutions. However, model retraining is needed for each adjustment.
\par In this study, we introduce BackDiff, a deep generative backmapping approach built upon conditional score-based diffusion models (\cite{song2020score}). BackDiff aims to achieve transferability across different proteins and generalization across CG methods. The high-level idea of BackDiff is to resolve the transferability of CG atoms at the training phase and CG auxiliary variables at the sampling phase. 
The primary objective of the training is to reconstruct the missing atoms by learning the probability distribution of these atoms conditioned on CG atoms, using conditional score-based diffusion models. The diffusion model gradually transitions the missing atoms from their original states to a noisy distribution via a forward diffusion process, and learns the reverse diffusion process, which recovers the target geometric distribution from the noise. 
In order to train a model transferable to multiple CG methods, we develop a self-supervised training method that semi-randomly selects CG atoms in each epoch during training. Due to the variability of CG auxiliary functions, it's infeasible to train a single model adaptable to all CG auxiliary variables. By harnessing the unique properties of the score-based diffusion model, we address this challenge through manifold constraint sampling (\cite{chung2022diffusion}), allowing for flexibility across different CG auxiliary variables. The CG auxiliary variables act as a guiding constraint to the data manifold during the reverse diffusion process, ensuring that our sampled data remains within the generative boundaries defined by these CG auxiliary variables. 
\par We employ BackDiff for extensive backmapping experiments across various popular CG models, all without the need for model retraining. Numerical evaluations demonstrate that BackDiff consistently delivers robust performance across diverse CG models, and commendable transferability across protein spaces, even when data is limited.

\section{RELATED WORK}
\subsection{Score-based diffusion models}
\label{sec:score}
The score-based diffusion model perturbs data with original distribution $p_0(\mathbf{x})$ to noise with a diffusion process over a unit time horizon by a linear stochastic differential equation (SDE):
\begin{equation}
    d \mathbf{x}=\mathbf{f}(\mathbf{x}, t) d t+g(t) d \mathbf{w}, \; t\in [0,T],
    \label{Foward_sde}
\end{equation} where $\mathbf{f}(\mathbf{x},t), g(t)$ are chosen diffusion and drift functions and $\mathbf{w}$ denotes a standard Wiener process.  With a sufficient amount of time steps, the prior distribution $p_{T}(\mathbf{x})$ approaches a simple Gaussian distribution. For any diffusion process in \eqref{Foward_sde}, it has a corresponding reverse-time SDE (\cite{anderson1982reverse}):
\begin{equation}
    d \mathbf{x} = [\mathbf{f}(\mathbf{x},t) - g^{2}(t)\nabla_{\mathbf{x}_t}\log p_{t}(\mathbf{x}_t)]dt +g(t)d\bar{\mathbf{w}},
    \label{Rev_sde}
\end{equation}
with $\bar{\mathbf{w}}$ a standard Wiener process in the reverse-time. The trajectories of the reverse SDE~(\ref{Rev_sde}) have the same marginal densities as the forward SDE~(\ref{Foward_sde}). Thus, the reverse-time SDE~(\ref{Rev_sde}) can gradually convert noise to data. The score-based diffusion model parameterizes the time-dependent score function $\nabla_{\mathbf{x}_t}\log p_{t}(\mathbf{x}_t)$ in the reverse SDE~(\ref{Rev_sde}) with a neural network $\mathbf{s}_{{\theta}}(\mathbf{x}(t), t)$. The time-dependent score-based model $\mathbf{s}_{{\theta}}(\mathrm{x}(t), t)$ can be trained via minimizing the denoising score matching loss:
\begin{equation}
    J(\theta)=\underset{{\theta}}{\arg \min } \mathbb{E}_t\left\{ \mathbb{E}_{\mathbf{x}(0)} \mathbb{E}_{\mathbf{x}(t) \mid \mathbf{x}(0)}\left[\left\|\mathbf{s}_{{\theta}}(\mathbf{x}(t), t)-\nabla_{\mathbf{x}_t} \log p(\mathbf{x}(t) \mid \mathbf{x}(0))\right\|_2^2\right]\right\},
    \label{sde_loss}
\end{equation}
with $t$ uniformly sampled between $[0,T]$. To sample from the data distribution $p(\mathbf{x})$, we first draw a sample from the prior distribution $p(\mathbf{x}_T) \sim \mathcal{N}(\mathbf{0}, \boldsymbol{I})$, and then discretize and solve the reverse-time SDE with numerical methods, e.g. Euler-Maruyama discretization.
\par In this work, we consider the variance preserving (VP) form of the SDE in Denoising Diffusion Probabilistic Model (DDPM) (\cite{ho2020denoising}): 
\begin{equation}
\mathrm{d} \mathbf{x}=-\frac{1}{2} \beta(t) \mathbf{x} d t+\sqrt{\beta(t)} \mathrm{d} \mathbf{w},
\label{DDPM}
\end{equation}
with $\beta(t)$ representing the variance schedule. In a discretized setting of DDPM, we define $\beta_1, \beta_2,...,\beta_T$ as the sequence of fixed variance schedule, $\alpha_t = 1- \beta_t$ and $\bar{\alpha}_t = \prod_{i=1}^t \alpha_i$, then the forward diffusion process can be written as:
\begin{equation}
\mathbf{x}_t=\sqrt{\bar{\alpha}_t} \mathbf{x}_0+\sqrt{1-\bar{\alpha}_t} \mathbf{z}, \quad \mathbf{z} \sim \mathcal{N}(\mathbf{0}, \boldsymbol{I}).
\label{eqn:discre_ddpm}
\end{equation}
\subsection{Conditional score-based diffusion model for imputation problems}
Let us consider a general missing value imputation problem: given a sample $\mathbf{x} \equiv \{\mathbf{x}_{\text{known}}, \mathbf{x}_{\text{omit}}\}$, where $\mathbf{x}_{\text{known}}$ represents observed values and $\mathbf{x}_{\text{omit}}$ represents missing values, and $\mathbf{x}_{\text{known}}$, $\mathbf{x}_{\text{omit}}$ can vary by samples, we want to recover $\mathbf{x}_{\text{omit}}$ with the conditional observed values $\mathbf{x}_{\text{known}}$. In the context of protein backmapping, $\mathbf{x}_{\text{known}}$ represents the atomic coordinates of CG atoms, while $\mathbf{x}_{\text{omit}}$ denotes the atomic coordinates to be recovered. Thus, the imputation problem can be formulated as learning the true conditional probability $p(\mathbf{x}_{\text{omit}}|\mathbf{x}_{\text{known}})$ with a parameterized distribution $p_{\theta}(\mathbf{x}_{\text{omit}}|\mathbf{x}_{\text{known}})$.
\par We can apply score-based diffusion model to the imputation problem by incorporating the conditional observed values into the reverse diffusion from \eqref{Rev_sde}:
\begin{equation}
    d \mathbf{x}_{\text{omit}} = [\mathbf{f}(\mathbf{x}_{\text{omit}},t) - g^{2}(t)\nabla_{\mathbf{x}_{\text{omit}_t}}\log p_{t}(\mathbf{x}_{\text{omit}_t}|\mathbf{x}_{\text{known}})]dt +g(t)d\bar{\mathbf{w}}.
    \label{Rev_sde_imputation}
\end{equation}
\subsection{Manifold constraint sampling for inverse problems}
\label{subsec:CSDI}
Consider a many-to-many mapping function $\mathcal{A}: \mathbf{X}  \rightarrow \mathbf{Y} $. The inverse problem is to retrieve the distribution of $\mathbf{x} \in \mathbf{X}$, which can be multimodal, given a measurement $\mathbf{y} \in \mathbf{Y}$. In the protein backmapping problem, $\mathbf{y}$ corresponds to the CG auxiliary variables and $\mathbf{x}$ the atomic coordinates to recover. With the Bayes' rule:
\begin{equation}
    p(\mathbf{x}|\mathbf{y}) = p(\mathbf{y}|\mathbf{x})p(\mathbf{x})/p(\mathbf{y}),
\end{equation} we can take $p(\mathbf{x})$ as the prior and sample from the posterior $p(\mathbf{x}|\mathbf{y})$. If we take the score-based diffusion model as the prior, we can use the reverse diffusion from \eqref{Rev_sde} as the sampler from the posterior distribution as follows:
\begin{equation}
    d \mathbf{x} = [\mathbf{f}(\mathbf{x},t) - g^{2}(t)(\nabla_{\mathbf{x}_t}\log p_{t}(\mathbf{x}_t) + \nabla_{\mathbf{x}_t}\log p_{t}(\mathbf{y}|\mathbf{x}_t))]dt +g(t)d\bar{\mathbf{w}}
    \label{Rev_sde_cond},
\end{equation} using the Bayes' rule:
\begin{equation}
    \nabla_{\mathbf{x}_t}\log p_{t}(\mathbf{x}_t|\mathbf{y}) = \nabla_{\mathbf{x}_t}\log p_{t}(\mathbf{x}_t) + \nabla_{\mathbf{x}_t}\log p_{t}(\mathbf{y}|\mathbf{x}_t).
\end{equation}
By estimating the score function $\nabla_{\mathbf{x}_t}\log p_{t}(\mathbf{x}_t)$ with a trained score model $\mathbf{s}_{\theta}$ and computing the likelihood $\nabla_{\mathbf{x}_t}\log p_{t}(\mathbf{y}|\mathbf{x}_t)$, we can obtain the posterior likelihood $\nabla_{\mathbf{x}_t}\log p_{t}(\mathbf{x}_t|\mathbf{y})$. However, computing $\nabla_{\mathbf{x}_t}\log p_{t}(\mathbf{y}|\mathbf{x}_t)$ in a closed-form is difficult since there is not a clear dependence between $\mathbf{y}$ and $\mathbf{x}_t$.
\par Observing that $\mathbf{y}$ and $\mathbf{x}_t$ are independently conditioned on $\mathbf{x}_0$, we can factorize $p(\mathbf{y}|\mathbf{x}_{t})$ as:
\begin{equation}
    \begin{split}
        p(\mathbf{y}|\mathbf{x}_{t}) &= \int p(\mathbf{y}|\mathbf{x}_{0},\mathbf{x}_{t})p(\mathbf{x}_{0}|\mathbf{x}_{t})d\mathbf{x}_{0}=\int p(\mathbf{y}|\mathbf{x}_{0})p(\mathbf{x}_{0}|\mathbf{x}_{t})d\mathbf{x}_{0},\\
    \end{split}
\end{equation} which interprets the conditional probability as:
\begin{equation}
p\left(\mathbf{y} \mid \mathbf{x}_t\right)=\mathbb{E}_{\mathbf{x}_0 \sim p\left(\mathbf{x}_0 \mid \mathbf{x}_t\right)}\left[p\left(\mathbf{y} \mid \mathbf{x}_0\right)\right].
\end{equation} This further implies that we can approximate the conditional probability as:
\begin{equation}
p\left(\mathbf{y} \mid \mathbf{x}_t\right) \simeq p\left(\mathbf{y} \mid \hat{\mathbf{x}}_0\right),
\label{approximation}
\end{equation} where $\hat{\mathbf{x}}_0=\mathbb{E}\left[\mathbf{x}_0 \mid \mathbf{x}_t\right]$.

\cite{chung2022diffusion} proves that for DDPM, $p(\mathbf{x}_0|\mathbf{x}_t)$ has a unique posterior mean as:
\begin{equation}
\hat{\mathbf{x}}_0=\frac{1}{\sqrt{\bar{\alpha}(t)}}\left(\mathbf{x}_t+(1-\bar{\alpha}(t)) \nabla_{\mathbf{x}_t} \log p_t\left(\mathbf{x}_t\right)\right).
\label{post_mean}
\end{equation}
\par The conditional probability $p(\mathbf{y}|\mathbf{x}_0)$ is a Dirac delta function $p(\mathbf{y}|\mathbf{x}_0) = \delta (\mathbf{y} - \mathcal{A}(\mathbf{x}_0))$. In practice, we replace the multidimensional $\delta$ function with a multidimensional Gaussian of small variance, which regularizes the strict constraints $\mathcal{A}(\mathbf{x}_0) = \mathbf{y}$ with a tight restraint:
\begin{equation}
p\left(\mathbf{y} \mid \mathbf{x}_0\right)=\frac{1}{\sqrt{(2 \pi)^n \sigma^{2 n}}} \exp \left[-\frac{\left\|\mathbf{y}-\mathcal{A}\left(\mathbf{x}_0\right)\right\|_2^2}{2 \sigma^2}\right],
\end{equation}
where $n$ is the dimension of $\mathbf{y}$ and $\sigma$ is the standard deviation. Taking the approximation from \eqref{approximation}, we can get:
\begin{equation}
\nabla_{\mathbf{x}_t} \log p\left(\mathbf{y} \mid \mathbf{x}_t\right) \simeq \nabla_{\mathbf{x}_t} \log p\left(\mathbf{y} \mid \hat{\mathbf{x}}_0\right) = -\frac{1}{\sigma^2} \nabla_{\mathbf{x}_t}\left\|\mathbf{y}-\mathcal{A}\left(\hat{\mathbf{x}}_0\right)\right\|_2^2.
\label{eqn:inv_final}
\end{equation}
Finally, by estimating $\nabla_{\mathbf{x}_t} \log p_t\left(\mathbf{x}_t\right)$ with a neural network $\mathbf{s}_{\theta}(\mathbf{x}_t,t)$, we can formulate the conditional reverse diffusion of DDPM modified from \eqref{Rev_sde_cond} as:
\begin{equation}
    d \mathbf{x} = [\mathbf{f}(\mathbf{x},t) - g^{2}(t)(\mathbf{s}_{\theta}(\mathbf{x}_t,t) -\zeta \nabla_{\mathbf{x}_t}\left\|\mathbf{y}-\mathcal{A}\left(\hat{\mathbf{x}}_0\right)\right\|_2^2]dt +g(t)d\bar{\mathbf{w}},
    \label{Rev_sde_cond_DDPM}
\end{equation} with $\hat{\mathbf{x}}_0$ expressed as in \eqref{post_mean}, and $\zeta = \frac{1}{\sigma^2}$ is the correction weight. 
\section{PRELIMINARY}
\subsection{Notations and Problem Definition}
\label{subsec:notations}
\textbf{Notations.} In this paper, each protein with $N$ heavy atoms is represented as an undirected graph $\mathcal{G}=\langle\mathcal{V}, \mathcal{E}\rangle$, where $\mathcal{V} = \{v_i\}_{i=1}^{N}$ is the set of nodes representing heavy atoms and $\mathcal{E}=\left\{e_{i j}|(i, j) \subseteq| \mathcal{V}|\times| \mathcal{V}|\right\}$ is the set of edges representing inter-atomic bonds and nonbonded interactions. An all-atom configuration of the protein can be represented as $\mathcal{C} =\left[\boldsymbol{c}_1, \boldsymbol{c}_2, \cdots, \boldsymbol{c}_N\right] \in \mathbb{R}^{N \times 3}$, with $\boldsymbol{c}_i$ the Cartesian coordinates of $i$-th heavy atom. A CG model defines a CG mapping function $\xi$: $\mathbf{\mathcal{R}}=\xi(\mathcal{C})$, that transforms the all-atom coordinate representation $\mathcal{C}$ to a lower-dimensional CG representation $\mathbf{\mathcal{R}} \in \mathbb{R}^{n}$ ($n < 3N$). We further denote $\mathbf{\mathcal{R}} \equiv \{\mathcal{R}_{\text{atm}}, \mathcal{R}_{\text{aux}}\} $, where $\mathcal{R}_{\text{atm}}$ represents CG atoms and $\mathcal{R}_{\text{aux}}$ are CG auxiliary variables. 
\\
\par \textbf{Problem Definition.} Given a protein graph $\mathcal{G}$ and a coarse-grained configuration $\mathcal{R}$, the task of protein backmapping is to learn and efficiently sample from $p(\mathcal{C}|\mathcal{R},\mathcal{G})$. This will allow us to conduct CG MD simulations to longer time- and length- scales for any protein with a CG method chosen at will, and recover the lost information by sampling from the posterior distribution, without the need for retraining. In this work, we only require that the atomic coordinates of all alpha carbons ($C_\alpha$) are included in CG representations $\mathcal{R}$.

\section{BACKDIFF METHOD}
In this section, we elaborate on the proposed Backdiff framework. On the high level, Backdiff addresses the transferability of $\mathcal{R}_{\text{atm}}$ and $\mathcal{R}_{\text{aux}}$ distinctly, resolving them in different components of the work. During training, we develop a diffusion-based generative model to learn the distribution $p(\mathcal{C}|\mathcal{R}_{\text{atm}},\mathcal{G})$. We approach this training by viewing it as a missing-node-imputation problem, and devise a self-supervised training strategy that can accommodate a wide range of missing-node combinations. During the sampling procedure, we enforce the condition of $\mathcal{R}_{\text{aux}}$ by incorporating a correction term through the reverse diffusion process. This ensures accurate sampling from the distribution $p(\mathcal{C}|\mathcal{R}_{\text{atm}},\mathcal{R}_{\text{aux}},\mathcal{G})$. Finally, we apply the same manifold constraint sampling technique on bond lengths and bond angles to avoid generating unrealistic protein configurations. In Sec.  \ref{formulation} and Sec. \ref{subsec:explicit_CG}, we present a description of the training process and the self-supervised training strategy.  In Sec.  \ref{sec:equivariance}, we explain how BackDiff avoids dealing with equivariance.  Finally, we show how to utilize manifold constraint sampling to adapt to arbitrary CG auxiliary variables $\mathcal{R}_\text{aux}$ and to enforce bond lengths and bond angles in Sec. \ref{subsec:implicit} and Sec. \ref{subsec:lenghs}. The high-level schematic of the sampling process is shown in Fig. \ref{fig:frame_plot}. An equivariant Graph Neural Network is used in this paper to parameterize the score-based diffusion model. We elaborate on details of the GNN architecture used in Appendix  \ref{sec:architecture}.

\begin{figure*}[h]
    \centering
    \includegraphics[width=5.5in]{}
    \caption{The sampling process of BackDiff. The reverse diffusion process gradually converts the noisy configuration into the plausible configuration, conditioned on CG atoms $\mathcal{R}_{\text{atm}}$. In each diffusion step, the configuration is ``corrected'' with auxiliary variables, bond lengths and bond angles as manifold constraints.}
    \label{fig:frame_plot}
\end{figure*}

\subsection{Training formulation}
Let us denote the target all-atom configuration as $\mathcal{C} \equiv \{\mathcal{C}_{\text{omit}},\mathcal{R}_{\text{atm}} \}$, with $\mathcal{C}_{\text{omit}}$ denoting the Cartesian coordinates of atoms omitted during the coarse-graining process. We further denote $\mathcal{D}$ the displacement of omitted atoms from the $C_\alpha$ of their corresponding residues. Since we require the atomic coordinates of $C_\alpha$ to be incorporated in the CG conditions, we can observe that $p(\mathcal{C}|\mathcal{R}_{\text{atm}},\mathcal{G}) = p(\mathcal{C}_{\text{omit}}|\mathcal{R}_{\text{atm}},\mathcal{G}) = p(\mathcal{D}|\mathcal{R}_{\text{atm}},\mathcal{G})$. We choose $p(\mathcal{D}|\mathcal{R}_{\text{atm}},\mathcal{G})$ as our learning target since compared to the Cartesian coordinates $\mathcal{C}_{\text{omit}}$, the displacement $\mathcal{D}$ spans a smaller data range and thus enhances training stability. 
 \par We model the conditional distribution $p(\mathcal{D}|\mathcal{R}_{\text{atm}},\mathcal{G})$ using the score-based diffusion model with a modified reverse diffusion defined in \eqref{Rev_sde_imputation}. We define a parameterized conditional score function $\mathbf{s}_{\theta}:(\mathbf{D}_{t} \times \mathbb{R} | \mathbf{R}_{\text{atm}}) \rightarrow \mathbf{D}_{t} $ to approximate $\nabla_{\mathcal{D}_t}\log p_{t}(\mathcal{D}_t|\mathcal{R}_{\text{atm}})$. We follow the same training procedure for the unconditional score-based diffusion as described in Sec. \ref{sec:score}: given the Cartesian coordinates of CG atoms $\mathcal{R}_{\text{atm}}$ and the displacement of omitted atoms from alpha carbons $\mathcal{D}$, we perturb the displacement $\mathcal{D}$ with DDPM forward diffusion process defined following equation \ref{DDPM}:
 \begin{equation}
     \mathcal{D}_{t}=\sqrt{\bar{\alpha}_t} \mathcal{D}_{0}+\sqrt{1-\bar{\alpha}_t} \mathbf{z}, \quad \mathbf{z} \sim \mathcal{N}(\mathbf{0}, \boldsymbol{I}).
 \end{equation}
 
Next, we sample perturbed $\mathcal{D}$ and train $s_{\theta}$ by minimizing the loss function
 \begin{equation}
    \begin{split}
        J(\theta)=\underset{{\theta}}{\arg \min } \mathbb{E}_{t,\mathcal{D}(0),\mathcal{D}(t) \mid \mathcal{D}(0)}\left[\left\|\mathbf{s}_{{\theta}}(\mathcal{D}(t), t|\mathcal{R}_{\text{atm}})-\nabla_{\mathcal{D}(t)} \log p_{0 t}(\mathcal{D}(t) \mid \mathcal{D}(0),\mathcal{R}_{\text{atm}})\right\|_2^2\right].
    \end{split}
    \label{sde_loss_cond}
\end{equation}
Inspired by \cite{tashiro2021csdi}, we develop a self-supervised learning framework for the backmapping problem. During each iteration of training, for each all-atom configuration, we choose a set of atoms as CG atoms $\mathcal{R}_{\text{atm}}$, following a semi-randomized strategy, and leave the rest of the atoms as omitted atoms $\mathcal{C}_{\text{omit}}$ and compute their displacements $\mathcal{D}$ from corresponding alpha carbons. During training, the choice of CG atoms will change from iteration to iteration.
\label{formulation}
\subsection{Choice of CG atoms in self-supervised learning}
\label{subsec:explicit_CG}
In this study, all $C_\alpha$ are enforced as CG atoms. For the rest of the atoms, we provide three strategies for choosing CG atoms during training. Each strategy can be chosen based on information known about the target proteins and CG methods.
\par (1) Random strategy: we randomly select a certain percentage of atoms as CG atoms. This strategy should be adopted if we do not know the common choices of CG atoms of CG models. The percentage is uniformly sampled from $\left[ 0\%,100\% \right]$ to adapt to various CG resolutions.
\par (2) Semi-random strategy: for different types of atoms ($C,N$ on the backbone, $C_\beta, C_\gamma$ on the side chain, etc.), we assign different percentages to choose as CG atoms. This strategy should be adopted if we know the common choices of CG atoms but want to keep the diversity of training.
\par (3) Fix strategy: we choose a fixed set of atoms as CG atoms. This strategy is adopted if we want to train BackDiff with respect to a specific CG method.
\par  More detailed descriptions of algorithms of methods (1) and (2) are given in Appendix~\ref{subsec:CG_strategy}.
\subsection{Equivariance}
\label{sec:equivariance}
Equivariance is a commonly used property in geometric deep learning (\cite{satorras2021n,batzner20223,maron2018invariant}). A function $\phi: X \rightarrow Y$ is said to be equivariant w.r.t. a group $G$ if 
\begin{equation}
\phi\left(T_g(\mathbf{x})\right)=S_g(\phi(\mathbf{x})),
\end{equation}
where $T_g:X \rightarrow X$ and $S_g: Y \rightarrow Y$ are transformations of $g \in G$. In this work, we consider $G$ the SE(3) group, which is the group of rotation and translation. 
\begin{proposition}
    Our training target $p(\mathcal{C}|\mathcal{R}_{\text{atm}},\mathcal{G})$ is SE(3)-equivariant, \textit{i.e.},$p(\mathcal{C}|\mathcal{R}_{\text{atm}},\mathcal{G}) = p(T_{g}(\mathcal{C})|T_{g}(\mathcal{R_{\text{atm}}}),\mathcal{G})$, then for all diffusion time $t$, the time-dependent score function is SE(3)-equivariant:
    \begin{equation}
        \begin{split}
            \nabla_\mathcal{C}\log p_t(\mathcal{C}|\mathcal{R}_{\text{atm}},\mathcal{G}) &=  \nabla_\mathcal{C}\log p_t(T(\mathcal{C})|T(\mathcal{R}_{\text{atm}}),\mathcal{G})\\
            &=  S(\nabla_\mathcal{C}\log p_t(S(\mathcal{C})|S(\mathcal{R}_{\text{atm}}),\mathcal{G}))\\
        \end{split}
    \end{equation}
for translation $T$ and rotation $S$.
\end{proposition}
\subsection{Manifold constraint sampling on CG auxiliary variables}
\label{subsec:implicit}
Let us consider CG auxiliary variables $\mathcal{R}_{\text{aux}}$ obtained from a many-to-many mapping function  $\xi_{\text{aux}}$:
\begin{equation}
    \mathcal{R}_{\text{aux}} = \xi_{\text{aux}}(\mathcal{D},\mathcal{R}_{\text{atm}}).
\end{equation}
With a learned $\mathbf{s}_{\theta}(\mathcal{D}_{t}|\mathcal{R}_{\text{atm}},\mathcal{G})$, our objective is to sample from $p_{\mathcal{G}}(\mathcal{D}|\mathcal{R}_{\text{atm}},\mathcal{R}_{\text{aux}})$ for an arbitrary CG auxiliary function $\xi_{\text{aux}}$ with the score-based diffusion model. The sampling process, however, requires knowledge of $\nabla_{\mathcal{D}_{t}}\log p_{\mathcal{G}}(\mathcal{D}_{t}|\mathcal{R}_{\text{atm}},\mathcal{R}_{\text{aux}})$. We can compute $\nabla_{\mathcal{D}_{t}}\log p_{\mathcal{G}}(\mathcal{D}_{t}|\mathcal{R}_{\text{atm}},\mathcal{R}_{\text{aux}})$ from $\nabla_{\mathcal{D}_{t}}\log p_{\mathcal{G}}(\mathcal{D}_{t}|\mathcal{R}_{\text{atm}})$ using Baye's rule:
\begin{equation}
    \begin{split}
        \nabla_{\mathcal{D}_{t}}\log p_{\mathcal{G}}(\mathcal{D}_{t}|\mathcal{R}_{\text{atm}},\mathcal{R}_{\text{aux}}) &= \nabla_{\mathcal{D}_{t}}\log p_{\mathcal{G}}(\mathcal{D}_{t}|\mathcal{R}_{\text{atm}})\\
        &+ \nabla_{\mathcal{D}_{t}}\log p_{\mathcal{G}}(\mathcal{R}_{\text{aux}}|\mathcal{R}_{\text{atm}},\mathcal{D}_{t}).
    \end{split}
    \label{eqn:tar_split}
\end{equation}
This decomposition allows us to take $p_{\mathcal{G}}(\mathcal{D}_{t}|\mathcal{R}_{\text{atm}})$ as prior and sample from $p_{\mathcal{G}}(\mathcal{D}|\mathcal{R}_{\text{atm}},\mathcal{R}_{\text{aux}})$ with the manifold constraint sampling technique. The first term $\nabla_{\mathcal{D}_{t}}\log p_{\mathcal{G}}(\mathcal{D}_{t}|\mathcal{R}_{\text{atm}})$ is estimated with $\mathbf{s}_{\theta}(\mathcal{D}_{t}|\mathcal{R}_{\text{atm}},\mathcal{G})$, and the second term $\nabla_{\mathcal{D}_{t}}\log p_{\mathcal{G}}(\mathcal{R}_{\text{aux}}|\mathcal{R}_{\text{atm}},\mathcal{D}_{t})$ is estimated following \eqref{eqn:inv_final}:
\begin{equation}
\begin{split}
    \nabla_{\mathcal{D}_{t}}\log p_{\mathcal{G}}(\mathcal{R}_{\text{aux}}|\mathcal{R}_{\text{atm}},\mathcal{D}_{t}) &\simeq -\zeta \nabla_{\mathcal{D}_{t}}\left\|\mathcal{R}_{\text{aux}}-\xi_{\textbf{aux}}\left(\hat{\mathcal{D}}_{0},\mathcal{R}_{\text{atm}}\right)\right\|_2^2,
\end{split}
\label{eqn:inv_final_CG}
\end{equation}
where $\hat{\mathcal{D}}_{0}$ is computed according to equation \ref{post_mean}:
\begin{equation}
\hat{\mathcal{D}}_{0}=\frac{1}{\sqrt{\bar{\alpha}(t)}}\left(\mathcal{D}_{t}+(1-\bar{\alpha}(t)) \nabla_{\mathcal{D}_{t}}\log p_{\mathcal{G}}(\mathcal{D}_{t}|\mathcal{R}_{\text{atm}})\right).
\label{post_mean_displacement}
\end{equation} We provide the pseudo code of the sampling process in Appendix~\ref{SI:training_sampling}.

\subsection{Manifold constraint on bond lengths and angles}
\label{subsec:lenghs}
In proteins, the bond lengths and bond angles exhibit only minor fluctuations due to the strong force constants of covalent bonds. This results in an ill-conditioned probability distribution for Cartesian coordinates. A diffusion model based on Cartesian coordinates faces challenges in accurately learning such a distribution, potentially leading to the generation of unrealistic configurations.
In this work, we apply manifold constraints on bond lengths and bond angles in addition to CG auxiliary variables, as the posterior conditions.
\section{EXPERIMENT}
\label{sec:experiment}
\textbf{Datasets} Following the recent protein backmapping work, we use the protein structural ensemble database PED (\cite{lazar2021ped}) as our database. PED contains structural ensembles of 227 proteins, including intrinsically disordered protein (IDP). Among the 227 proteins, we choose 92 data computed from MD simulation or sampling methods for training and testing purposes.  

\textbf{Evaluation} We conduct several experiments to demonstrate the flexibility, reliability, and transferability of BackDiff. We evaluate the performance of Backdiff on 3 popular CG models: UNRES model (\cite{liwo2014unified}),Rosetta model (\cite{das2008macromolecular}, and MARTINI model (\cite{souza2021martini}). The CG mapping protocol of each model is summarized in Table .\ref{table:CG_prot} in Appendix \ref{sec:CG_prot_si}. We perform both single- and multi-protein experiments, with single-protein experiments training and inference on one single protein, and multi-protein experiments training and inference on multiple proteins.  Single-protein experiments are conducted on PED00011 (5926 frames) and PED00151 (9746 frames). We randomly split the training, validation, and testing datasets into PED00011 (3000 frames for training, 2826 frames for validation, 100 frames for testing), and PED00151 (4900 frames for training, 4746 frames for validation, and 100 frames for testing). For multi-protein experiments, we randomly select up to 500 frames for each protein from the dataset as the training dataset. For testing, we randomly select 100 frames other than the ones used in training for PED00011 and PED00151, and 45 frames other than the ones used in training for PED00055. In both single- and multi-protein experiments, we test BackDiff with fixed training strategies (CG-fixed) and BackDiff with semi-random training strategy (CG-transferable).

\textbf{Baselines} We choose GenZProt (\cite{yang2023chemically}) and modified Torsional Diffusion (TD) (\cite{jing2022torsional}) as the state-of-the-art baselines. 
Since GenZProt and Torsional Diffusion utilize internal coordinates (torsion angles) as training objectives and adapting them to multiple CG methods can be ill-defined, we conduct single- and multi-protein experiments with fixed CG methods for the two baseline models. 

\textbf{Evaluation Metrics}
Since backmapping generates multiple configurations ($\mathcal{C}_\text{gen}$) from 
one CG configuration, a good protein backmapping model should be able to generate some samples that match the original all-atom configuration 
($\mathcal{C}_\text{ref}$) (accuracy), consist of new configurations (diversity), and are physically realistic. For the accuracy metrics, we identify one generated sample ($\mathcal{C}_\text{min}$) with the minimum Root Mean Squared Distance (RMSD$_{\mathrm{min}}$) w.r.t $\mathcal{C}_\text{ref}$, and compute the Mean Square Error (MSE) of $\mathcal{C}_\text{min}$'s sidechain COMs from $\mathcal{C}_\text{ref}$ (SCMSE$_{\mathrm{min}}$). We report the mean and standard deviation of the RMSD$_{\mathrm{min}}$ and SCMSE$_{\mathrm{min}}$ across all testing frames. A lower RMSD$_{\mathrm{min}}$ and SCMSE$_{\mathrm{min}}$ indicate the model's stronger capacity to find $\mathcal{C}_\text{ref}$ as one representitave sample. For the diversity metric, we evaluate the generative diversity score (DIV) of $\mathcal{C}_\text{gen}$ and $\mathcal{C}_\text{ref}$, as suggested in \cite{jones2023diamondback}: $\text{DIV}(\mathcal{C}_\text{gen},\mathcal{C}_\text{ref})$. Full definitions of DIV is provided in Appendix \ref{app:metrics}. A lower DIV suggests that the model can generate diverse $\mathcal{C}_\text{gen}$. Finally, we use steric clash ratio (SCR) to evaluate whether a model can generate physically realistic samples. SCR is defined following the metric in GenZProt: the ratio of steric clash occurrence in all atom-atom pairs within 5.0 \AA, where the steric clash is defined as an atom-atom pair with a distance smaller than 1.2 \AA.

\begin{table}[h!]
\begin{center}
\begin{tabular}{C{3cm}cccc}
\toprule & Method & PED00011 & PED00151  \\
\midrule  & \textbf{BackDiff (fixed)} & $\mathbf{0.415(0.107)}$ & $\mathbf{0.526 (0.125)}$\\
& BackDiff (trans) & $0.598(0.112)$ &$0.663(0.182)$ \\
$\text{RMSD}_\text{min}$ ($\text{\AA}$) & GenZProt & $1.392(0.276)$ &$1.246(0.257)$ \\
& TD & $1.035(0.158)$ &$1.253(0.332)$ \\
\midrule & \textbf{BackDiff (fixed)} & $\mathbf{0.100(0.035)}$ &$\mathbf{0.105(0.063)}$\\
& BackDiff (trans) & $0.216(0.178)$&$0.320(0.157)$\\
SCR ($\%$)& GenZProt & $0.408(0.392)$ 
&$0.647(0.384)$\\
& TD & $0.356(0.303)$ &$0.452(0.187)$\\
\midrule & \textbf{BackDiff (fixed)} & $\mathbf{0.045(0.008)}$ &$\mathbf{0.049(0.021)}$ \\
& BackDiff (trans) & $0.061(0.010)$&$0.104(0.038)$ \\
SCMSE$_{\text{min}}$ ($\text{\AA}^2$)& GenZProt & $1.225(0.121)$ &$1.340(0.182)$ \\
& TD & $1.134(0.125)$ & $1.271(0.158)$\\
\midrule & \textbf{BackDiff (fixed)} & $\mathbf{0.045(0.027)}$ &$\mathbf{0.072(0.034)}$ \\
& BackDiff (trans) & $0.144(0.045)$ &$0.201(0.032)$ \\
$\text{DIV}$& GenZProt & $0.453(0.241)$ &$0.527(0.185)$ \\
& TD & $0.128(0.064)$ & $0.146(0.049)$\\
\bottomrule
\end{tabular}
\end{center}
\caption{{Results on single-protein experiments backmapping from UNRES CG model. The method labeled ``BackDiff (trans)'' is CG-transferable, while the other three are CG-fixed. We report the mean and standard deviation for 100 generated samples.} }
\label{table:single_exp}
\end{table}
\begin{table}[h!]
\begin{center}
\begin{tabular}{C{2.5cm}cccc}
\toprule & Method & PED00011 & PED00055 & PED00151  \\
\midrule & \textbf{BackDiff (fixed)} & $\mathbf{0.652( 0.214})$ & $1.690 ( 0.372)$ &$\mathbf{1.292( 0.160)}$  \\
& BackDiff (trans) & $0.708( 0.188)$ & $\mathbf{1.340 ( 0.237})$ &$1.435( 0.226)$  \\
$\text{RMSD}_\text{min} ($\text{\AA}$)$ & GenZProt & $2.337 ( 0.466)$ & $2.741 ( 0.515)$&$2.634( 0.353)$  \\
& TD & $1.714 ( 0.385)$ &$2.282 ( 0.400)$ &$1.634( 0.282)$  \\
\midrule & \textbf{BackDiff (fixed)}  &$\mathbf{0.626 ( 0.482)}$ &  $0.829 ( 0.546)$ & $\mathbf{0.463 ( 0.268)}$\\
& BackDiff (trans) & $0.918( 0.609)$ & $\mathbf{0.786 ( 0.335)}$ &$0.820( 0.316)$  \\
SCR ($\%$) & GenZProt & $2.347 ( 1.289)$ &$2.477 ( 0.448)$ & $1.545( 0.602)$ \\
& TD & $0.983 ( 0.476)$ &$1.584  ( 0.501)$  &  $0.620( 0.320)$\\
\midrule & \textbf{BackDiff (fixed)}  & $\mathbf{0.076 ( 0.012)}$  & $0.103 ( 0.026)$ &$\mathbf{0.100 ( 0.021)}$ \\
& BackDiff (trans) & $0.082( 0.027)$ & $\mathbf{0.088 ( 0.015)}$ & $0.123( 0.040)$ \\
SCMSE$_{\text{min}}$ ($\text{\AA}^2$) & GenZProt & $1.951 (0.327)$ & $1.784 (0.402)$& $1.869(0.330)$\\
& TD & $1.320 (0.282)$ &$1.195 (0.318)$ &$1.717( 0.397)$ \\
\midrule & BackDiff (fixed) & $0.155(0.069)$ & $0.276(0.109)$&$0.213(0.087)$ \\
& \textbf{BackDiff (trans)} & $\mathbf{0.079(0.052)}$ & $\mathbf{0.143(0.067)}$&$\mathbf{0.122(0.060)}$ \\
$\text{DIV}$& GenZProt & $0.636(0.132)$ & $0.662(0.147)$&$0.612(0.143)$ \\
& TD & $0.179(0.066)$ & $0.252(0.086)$& $0.201(0.075)$\\
\bottomrule
\end{tabular}
\end{center}
\caption{Results on multi-protein experiments backmapping from UNRES CG model.}
\label{table:mult_exp}
\end{table}
\par We also perform ablation studies to assess the impact of constraining bond lengths and bond angles during BackDiff's sampling. This evaluation uses the Mean Absolute Error (MAE) to compare bond lengths and angles between the ground truth and generated samples. Since GenZProt and Torsional Diffusion construct all-atom configurations from internal coordinates (bond lengths, bond angles, and torsion angles), and inherently prevent unrealistic bond lengths and angles, we exclude their errors from the report.

\textbf{Results and discussions} The evaluation metric results on UNRES CG model are summarized in Table \ref{table:single_exp} and Table \ref{table:mult_exp}. As shown in the tables, BackDiff consistently outperforms the state-of-the-art ML models in both single- and multi-protein experiments, and is capable of generating all-atom configurations of higher accuracy, diversity and physical significance. Notably, even when BackDiff is trained for transferability across various CG methods, it maintains performance comparable to training with a fixed CG method. This underscores BackDiff's robust generalization and its reliability in adapting to diverse CG methods. A closer look at the sampled structures, as visualized in Figure \ref{fig:UNRES_11}, reveals that BackDiff more accurately recovers local structures.
\par As noted earlier, one limitation of the Cartesian-coordinate-based diffusion model is its inability to consistently produce realistic bond lengths and bend angles, given that these typically fluctuate within narrow ranges. In Table \ref{table:cons_exp}, we present the MAE for both bond lengths and bond angles, illustrating the benefits of constraining the sampling diffusion path using these parameters as posterior conditions. The results clearly indicate that manifold constraint sampling substantially reduces the errors in bond lengths and angles, enhancing the model's overall performance.

\begin{table}[h!]
\begin{center}
\begin{tabular}{C{3cm}cccc}
\toprule & Method & PED00011 & PED00055 & PED00151  \\
\midrule Bond length MAE ($\text{\AA}$) & \textbf{BackDiff (cons)} & $\mathbf{<0.001}$ & $\mathbf{<0.001}$ & $\mathbf{<0.001}$ \\
& BackDiff (plain) & $0.542( 0.047)$ & $0.496 ( 0.055)$ &$0.332( 0.032)$ \\
\midrule Bond angle MAE& \textbf{BackDiff (cons)}  &$\mathbf{0.167 ( 0.095)}$ &  $\mathbf{0.106 ( 0.088)}$ & $\mathbf{0.124 ( 0.097)}$\\
& BackDiff (plain) & $0.333( 0.071)$ & $0.245 ( 0.070)$ & $0.251( 0.082)$ \\
\midrule SCR ($\%$)& \textbf{BackDiff (cons)}  & $\mathbf{0.918( 0.609)}$ & $\mathbf{0.786 ( 0.335)}$ &$\mathbf{0.820( 0.316)}$ \\
& BackDiff (plain) & $2.884( 0.813)$ & $2.507 ( 0.654)$ &$2.301( 0.344)$  \\
\bottomrule
\end{tabular}
\end{center}
\caption{Ablation study on the bond lengths and bond angles manifold constraint sampling. Comparing configurations generated with manifold constraint sampling (BackDiff (cons)) and without the manifold constraint sampling (BackDiff (plain)) in multi-protein experiments backmapping from UNRES CG model. Both tests use the same trained CG-transferable BackDiff model.} 
\label{table:cons_exp}
\end{table}
\begin{figure*}[h]
    \centering
    \includegraphics[width=5.5in]{}
    \caption{Visualization of all-atom configurations sampled from different methods in multi-protein experiments backmapping from UNRES CG model.}
    \label{fig:UNRES_11}
\end{figure*}

\section{CONCLUSION}
In this work, we propose BackDiff, a generative model for recovering proteins' all-atom structures from coarse-grained simulations. BackDiff combines a self-supervised score-based diffusion model with manifold constraint sampling to adapt to different CG models and utilizes geometric representations to achieve transferability across different proteins. Our rigorous experiments across various prominent CG models underscore BackDiff's exceptional performance and unparalleled adaptability. Looking ahead, we aim to improve the sampling efficiency of the diffusion model, refine the manifold constraint sampling process, integrate a more robust dataset to further enhance the model's capabilities, and expand our experimental scope to include recent CG models with data-driven mapping protocols.

\bibliography{iclr2024_conference}
\bibliographystyle{iclr2024_conference}

\newpage
\appendix
\section{PROOF}
\subsection{Proof of \eqref{eqn:tar_split}}
Let us rewrite $\nabla_{\mathcal{D}_{t}}\log p_{t}(\mathcal{D}_{t}|\mathcal{R}_{\text{atm}},\mathcal{G})$ and $\nabla_{\mathcal{D}_{t}}\log p_{t}(\mathcal{D}_{t}|\mathcal{R}_{\text{atm}},\mathcal{R}_{\text{aux}},\mathcal{G})$ using Baye's rule:
\begin{equation}
    \begin{split}
        \nabla_{\mathcal{D}_{t}}\log p_{\mathcal{G}}(\mathcal{D}|\mathcal{R}_{\text{atm}}) &= \nabla_{\mathcal{D}_{t}}\log p_{\mathcal{G}}(\mathcal{R}_{\text{atm}}|\mathcal{D}_{t}) + \nabla_{\mathcal{D}_{t}}\log p_{\mathcal{G}}(\mathcal{D}_{t})\\
        \nabla_{\mathcal{D}_{t}}\log p_{\mathcal{G}}(\mathcal{D}_{t}|\{\mathcal{R}_{\text{atm}},\mathcal{R}_{\text{aux}}\}) &= \nabla_{\mathcal{D}_{t}}\log p_{\mathcal{G}}(\{\mathcal{R}_{\text{atm}},\mathcal{R}_{\text{aux}}\}|\mathcal{D}_{t}) + \nabla_{\mathcal{D}_{t}}\log p_{\mathcal{G}}(\mathcal{D}_{t})\\
        &=\nabla_{\mathcal{D}_{t}}\log p_{\mathcal{G}}(\mathcal{R}_{\text{atm}}|\mathcal{D}_{t}) \\&+ \nabla_{\mathcal{D}_{t}}\log p_{\mathcal{G}}(\mathcal{R}_{\text{aux}}|\{\mathcal{R}_{\text{atm}},\mathcal{D}_{t}\})  \\&+ \nabla_{\mathcal{D}_{t}}\log p_{\mathcal{G}}(\mathcal{D}_{t}),
    \end{split}
\end{equation}
and we complete the proof.
\subsection{Proof of Proposition 1.}
\setcounter{proposition}{0}

\begin{proposition}
    Our training target $p(\mathcal{C}|\mathcal{R}_{\text{atm}},\mathcal{G})$ is SE(3)-equivariant, \textit{i.e.},$p(\mathcal{C}|\mathcal{R}_{\text{atm}},\mathcal{G}) = p(T_{g}(\mathcal{C})|T_{g}(\mathcal{R_{\text{atm}}}),\mathcal{G})$, then for all diffusion time $t$, the time-dependent score function is SE(3)-equivariant:
    \begin{equation}
        \begin{split}
            \nabla_\mathcal{C}\log p_t(\mathcal{C}|\mathcal{R}_{\text{atm}},\mathcal{G}) &=  \nabla_\mathcal{C}\log p_t(T(\mathcal{C})|T(\mathcal{R}_{\text{atm}}),\mathcal{G})\\
            &=  S(\nabla_\mathcal{C}\log p_t(S(\mathcal{C})|S(\mathcal{R}_{\text{atm}}),\mathcal{G}))\\
        \end{split}
    \end{equation}
for translation $T$ and rotation $S$.
\end{proposition}
\begin{proof}
    In VP-SDE, the perturbation kernel can be written as:
    \begin{equation}
        p_{t|0}(\mathcal{C}(t) \mid \mathcal{C}(0)) = \mathcal{N}\left(\mathcal{C}(t) ; \mathcal{C}(0) e^{-\frac{1}{2} \int_0^t \beta(s) \mathrm{d} s}, \mathbf{I}-\mathbf{I} e^{-\int_0^t \beta(s) \mathrm{d} s}\right),
    \end{equation}
    which is SE(3) equivariant.
    We can link the perturbation kernel under translation and rotation:
    \begin{equation}
    \begin{split}
        p_t(\mathcal{C}|\mathcal{R}_{\text{atm}},\mathcal{G})
        &= \int p_0(\mathcal{C}'|\mathcal{R}_{\text{atm}},\mathcal{G}) p_{t|0}(\mathcal{C}|\mathcal{C}')d\mathcal{C}'\\
        &= \int p_0(T_g(\mathcal{C}')|T_g(\mathcal{R}_{\text{atm}}),\mathcal{G}) p_{t|0}(T_g(\mathcal{C})|T_g(\mathcal{C}'))d T_g(\mathcal{C}')\\
        &= p_t(T_g(\mathcal{C})|T_g(\mathcal{R}_{\text{atm}}),\mathcal{G}).
    \end{split}
\end{equation}
For $T$ being translational transformation, we have:
\begin{equation}
    \begin{split}
        \nabla_{\mathcal{C}} \log p_t(\mathcal{C}|\mathcal{R}_{\text{atm}},\mathcal{G}) &= \nabla_{\mathcal{C}} \log p_t(T(\mathcal{C})|T(\mathcal{R}_{\text{atm}}),\mathcal{G})\\
         &= \frac{\partial T(\mathcal{C})}{\partial \mathcal{C}} \nabla_{T(\mathcal{C})} \log p_t(T(\mathcal{C})|T(\mathcal{R}_{\text{atm}}),\mathcal{G})\\
        &=\nabla_{T(\mathcal{C})} \log p_t(T(\mathcal{C})|T(\mathcal{R}_{\text{atm}}),\mathcal{G}).
    \end{split}
\end{equation}
Similarly, for $S$ being rotational transformation, we have
\begin{equation}
    \begin{split}
        \nabla_{\mathcal{C}} \log p_t(\mathcal{C}|\mathcal{R}_{\text{atm}},\mathcal{G}) &= \nabla_{\mathcal{C}} \log p_t(S(\mathcal{C})|S(\mathcal{R}_{\text{atm}}),\mathcal{G})\\
         &= \frac{\partial S(\mathcal{C})}{\partial \mathcal{C}} \nabla_{S(\mathcal{C})} \log p_t(S(\mathcal{C})|S(\mathcal{R}_{\text{atm}}),\mathcal{G})\\
        &=S(\nabla_{S(\mathcal{C})} \log p_t(S(\mathcal{C})|S(\mathcal{R}_{\text{atm}}),\mathcal{G})),
    \end{split}
\end{equation}
and we complete the proof.
\end{proof}
\section{DETAILS OF DENOISING DIFFUSION PROBABILISTIC MODELS AND SCORE-BASED DIFFUSION MODEL}
The forward diffusion process with $T$ iterations of a DDPM model is defined as a fixed posterior distribution $p(\mathbf{x}_{1:T}|\mathbf{x}_{0})$. Given a list of fixed variance schedule $\beta_1,...,\beta_T$, we can define a Markov chain process:
\begin{equation}
\begin{split}
    p(\mathbf{x}_{1:T}|\mathbf{x}_0) &= \prod_{t=1}^{T}p(\mathbf{x_t|x_{t-1})}\\
    p(\mathbf{x}_t|\mathbf{x}_{t-1}) &= \mathcal{N}(\mathbf{x}_t;\sqrt{1-\beta_t}\mathbf{x}_{t-1},\beta_tI).
\end{split}
\label{eqn:markov_chain}
\end{equation}
We have the following property:
\begin{property}
    The marginal distribution of the forward diffusion process $p(\mathbf{x}_t|\mathbf{x}_0)$ can be written as:
    \begin{equation}
        p(\mathbf{x}_t|\mathbf{x}_0) = \mathcal{N}(\mathbf{x}_t;\sqrt{\bar{\alpha}_t}\mathbf{x}_0,(1-\bar{\alpha}_t)I).
        \label{eqn:marginal_prop}
    \end{equation}
\end{property}
This can be obtained by the following proof:
\begin{proof}
    Using $p(\mathbf{x}_t|\mathbf{x}_{t-1})$ from \eqref{eqn:markov_chain}, we can obtain:
    \begin{equation}
        \begin{split}
            \mathbf{x}_{t} &= \sqrt{\alpha_t}\mathbf{x}_{t-1} + \sqrt{\beta_t}\mathbf{z}_{t}\\
            &=\sqrt{\alpha_t\alpha_{t-1}}\mathbf{x}_{t-1} + \sqrt{\alpha_t\beta_{t-1}}\mathbf{z}_{t-1}+\sqrt{\beta_t}\mathbf{z}_t\\
            &=...\\
            &=\sqrt{\bar{\alpha}_t}\mathbf{x}_0 + \sqrt{\alpha_t\alpha_{t-1}...\alpha_{2}\beta_{1}}\mathbf{z}_{1} + ... + \sqrt{\alpha_{t}\beta_{t-1}}\mathbf{z}_{t-1}+\sqrt{\beta_t}\mathbf{z}_{t}.
        \end{split}
    \end{equation}
    
We can see that $p(\mathbf{x}_t|\mathbf{x}_0)$ can be written as a Gaussian with mean $\sqrt{\bar{\alpha}_t}\mathbf{x}_0$ and variance $(\alpha_{t}\alpha_{t-1}...\alpha_{2}\beta_{1} +...+\alpha_{t}\beta_{t-1}+\beta_{t})I = (1-\bar{\alpha}_t)I$.
\end{proof}
This property allows us to write the forward diffusion process in the form of \eqref{eqn:discre_ddpm}. As $T \rightarrow \infty$, the discretized \eqref{eqn:discre_ddpm} converges to the SDE form defined in  \eqref{DDPM}.
\begin{lemma}
    (Tweedie's formula) Let $\mu$ be sampled from a prior probability distribution $G(\mu)$ and $z \sim \mathcal{N}(\mu,\sigma^2)$, the posterior expectation of $\mu$ given $z$ is as:
    \begin{equation}
        \mathbb{E}\left[\mu \mid z \right] = z + \sigma^ 2\nabla_{z}\log p(z).
    \end{equation}
\end{lemma}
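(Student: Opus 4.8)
The plan is to prove Tweedie's formula directly from the definition of the posterior expectation, exploiting the special structure of the Gaussian likelihood $z \sim \mathcal{N}(\mu,\sigma^2)$ whose density satisfies the identity $\nabla_z \varphi_\sigma(z-\mu) = \frac{\mu - z}{\sigma^2}\,\varphi_\sigma(z-\mu)$, where $\varphi_\sigma$ is the centered Gaussian density with variance $\sigma^2$. The idea is that the score of the marginal $p(z)$ can be written as an integral against the posterior, and the Gaussian derivative identity converts that into exactly $(\mathbb{E}[\mu\mid z] - z)/\sigma^2$.

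First I would write the marginal density of $z$ as the convolution $p(z) = \int \varphi_\sigma(z-\mu)\,dG(\mu)$. Then I would differentiate under the integral sign with respect to $z$ (justified because the Gaussian and its derivative decay fast enough, so dominated convergence applies), obtaining
\begin{equation}
\nabla_z p(z) = \int \nabla_z \varphi_\sigma(z-\mu)\,dG(\mu) = \int \frac{\mu - z}{\sigma^2}\,\varphi_\sigma(z-\mu)\,dG(\mu).
\end{equation}
Next I would divide both sides by $p(z)$ and recognize Bayes' rule: the posterior density of $\mu$ given $z$ is $p(\mu\mid z) = \varphi_\sigma(z-\mu)\,dG(\mu)/p(z)$, so the right-hand side becomes $\frac{1}{\sigma^2}\int (\mu - z)\,p(\mu\mid z)\,d\mu = \frac{1}{\sigma^2}\left(\mathbb{E}[\mu\mid z] - z\right)$.

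Finally I would rearrange $\nabla_z \log p(z) = \nabla_z p(z)/p(z) = \frac{1}{\sigma^2}(\mathbb{E}[\mu\mid z] - z)$ to get $\mathbb{E}[\mu\mid z] = z + \sigma^2 \nabla_z \log p(z)$, completing the proof. The only delicate point — really the main obstacle, though a mild one — is justifying the interchange of differentiation and integration, which requires a local integrable bound on $\nabla_z \varphi_\sigma(z-\mu)$ uniform in $z$ over a neighborhood; this follows from the boundedness of $|\mu - z|\varphi_\sigma(z-\mu)$ and standard Gaussian tail estimates, so it is routine and I would state it briefly rather than belabor it. Everything else is algebraic manipulation of the Gaussian density and an application of Bayes' rule.
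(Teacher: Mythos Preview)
Your argument is correct and is the standard derivation of Tweedie's formula: write the marginal $p(z)$ as a Gaussian mixture, differentiate under the integral using the identity $\nabla_z \varphi_\sigma(z-\mu) = \frac{\mu-z}{\sigma^2}\varphi_\sigma(z-\mu)$, divide by $p(z)$, and recognize the posterior expectation via Bayes' rule. The paper itself does not prove this lemma at all --- it simply states Tweedie's formula as a known result and then applies it to derive Property~2 (the posterior mean $\mathbb{E}[\mathbf{x}_0\mid\mathbf{x}_t]$ for DDPM) --- so your write-up actually supplies more detail than the paper does, and there is nothing to compare against.
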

\par From Tweedie's formula, we can obtain the following property:
\begin{property}
    For DDPM with the marginal distribution $p(\mathbf{x}_t|\mathbf{x}_0)$  of the forward diffusion process computed in equation \ref{eqn:marginal_prop}, $p(\mathbf{x}_0|\mathbf{x}_t)$ has a posterior mean at:
\begin{equation}
    \mathbb{E}\left[\mathbf{x}_0 \mid \mathbf{x}_t\right] = \frac{1}{\sqrt{\bar{\alpha}(t)}}\left(\mathbf{x}_t+(1-\bar{\alpha}(t)) \nabla_{\mathbf{x}_t} \log p_t\left(\mathbf{x}_t\right)\right).
\end{equation}
\end{property}
\section{ALGORITHMS}
\subsection{Training and sampling algorithm of BackDiff}
\label{SI:training_sampling}
We provide the training procedure in Algorithm \ref{algo:training} and the manifold constraint sampling procedure in Algorithm \ref{algo:sampling}.
\begin{algorithm}
\begin{algorithmic}[1]
\caption{Training of Backdiff}
\Input proteins $[\mathcal{G}_0,...,\mathcal{G}_N]$, each with ensembles $[\mathcal{C}_{0},...,\mathcal{C}_{K_{\mathcal{G}}}]$, learning rate $a$, CG choice strategy $\mathcal{T}$, sequence of noise levels $[\alpha_1,...,\alpha_T]$ 
\Output trained score model $\mathbf{s}_{\theta}$
\For{$i=1$ to $N_{\text{iter}}$}
    \For{$\mathcal{G} \sim [\mathcal{G}_0,...,\mathcal{G}_N]$}
    \State uniformly sample $t \sim [1,...,T]$ and $\mathcal{C} \sim [\mathcal{C}_{0},...,\mathcal{C}_{K_{\mathcal{G}}}]$
    \State \parbox[t]{\linewidth}{Separate $\mathcal{C}$ into CG atoms $\mathcal{R}_{\text{atm}}$ and omit atoms (backmapping targets) $\mathcal{C}_{\text{omit}}$ by the \\CG choice strategy $\mathcal{T}$ with the observation mask $\mathcal{M}$}
    \State Calculate the displacement $\mathcal{D}$ of each omitted atom from its residue's $C_\alpha$'s position
    \State $\mathbf{z} \sim \mathcal{N}(0,I)$
    \State Calculate noisy displacement $\mathcal{D}_{t} = \sqrt{\alpha_t}\mathcal{D} + (1-\alpha_t)\mathbf{z}$
    \State Obtain noisy configuration $\mathcal{C}_{t}$ from $\mathcal{D}_{t}$
    \State predict $\hat{\mathbf{s}} = \mathbf{s}_{\theta,\mathcal{G}}(\mathcal{C}_t,\mathcal{M},t)$
    \State update $\theta \leftarrow \theta - a\nabla_{\theta}\left\|\hat{\mathbf{s}}-\nabla_{\mathcal{D}_{t}}\log p_{t \mid 0}(\mathcal{D}_{t} \mid \mathbf{0})\right\|^2$ 
    \EndFor
\EndFor
\label{algo:training}
\end{algorithmic}
\end{algorithm}
\begin{algorithm}[h]
\begin{algorithmic}[1]
\caption{BackDiff sampling with manifold constraint}
\Input protein molecular graph $\mathcal{G}$, CG mask $\mathcal{M}$, diffusion steps $T$, CG atoms $\mathcal{R}_{\text{atm}}$, CG auxiliary variables $\mathcal{R}_{\text{aux}}$, auxiliary CG mapping function $\xi_{\textbf{aux}}$, $\left\{\zeta_i\right\}_{i=1}^T,\left\{\tilde{\sigma}_i\right\}_{i=1}^T$, sequence of noise levels $[\alpha_1,...,\alpha_T]$
\Output predicted conformers $\mathcal{C}$
\State $\mathcal{D}_{t} \sim \mathcal{N}(\mathbf{0}, \boldsymbol{I})$
\For{$i=T-1$ to 0}
    \State Obtain noisy configuration $\mathcal{C}_i$ from $\mathcal{D}_{i}$ and $\mathcal{R}_{\text{atm}}$
    \State $\hat{\mathbf{s}} \leftarrow \mathbf{s}_\theta\left(\mathcal{C}_i,\mathcal{M},t\right)$
    \State $\hat{\mathcal{D}}_{0} \leftarrow \frac{1}{\sqrt{\overline{\boldsymbol{\alpha}}_i}}\left(\mathcal{D}_{i}+\left(1-\bar{\alpha}_i\right) \hat{\mathbf{s}}\right)$
    \State $\boldsymbol{z} \sim \mathcal{N}(\mathbf{0}, \boldsymbol{I})$
    \State $\mathcal{D}_{i-1}^{\prime} \leftarrow \frac{\sqrt{\alpha_i}\left(1-\bar{\alpha}_{i-1}\right)}{1-\bar{\alpha}_i} \mathcal{D}_{i}+\frac{\sqrt{\bar{\alpha}_{i-1}} \beta_i}{1-\bar{\alpha}_i} \hat{\mathcal{D}}_{0}+\tilde{\sigma}_i \boldsymbol{z}$
    \State $\mathcal{D}_{i-1} \leftarrow \mathcal{D}_{i-1}^{\prime}-\zeta_i \nabla_{\mathcal{D}_{i}}\left\|\mathcal{R}_{\text{aux}}-\xi_{\textbf{aux}}\left(\hat{\mathcal{D}}_0,\mathcal{R}_{\text{atm}}\right)\right\|_2^2$
\EndFor
\State Obtain $\mathcal{C}$ from $\hat{\mathcal{D}}_0$ and $\mathcal{R}_{\text{atm}}$
\label{algo:sampling}
\end{algorithmic}
\end{algorithm}
\subsection{CG atoms choice strategies}
\label{subsec:CG_strategy}
We elaborate on the CG atoms' choice strategies for the self-supervised training framework, as described in Sec. \ref{subsec:explicit_CG}. The random strategy is shown in Algorithm \ref{algo:random_strat} and the semi-random strategy is shown in Algorithm \ref{algo:empiri_strat}. In this work, we choose a semi-random strategy throughout the training, with the training ratio defined in Table \ref{table:cg_ratio}. The training ratio value is obtained by roughly estimating the usage of each atom type in popular CG models. We notice that incorporating a larger percentage of other atom types not listed, while enhancing the generalization across different CG protocols, will require longer training time. Except for the training ratio of $C_\alpha$, users can adjust the other values as needed.

\begin{algorithm}[]
\begin{algorithmic}[1]
\caption{CG atoms choice: random strategy}
\Input a training sample with N heavy atoms: $\mathcal{C} =\left[\boldsymbol{c}_1, \boldsymbol{c}_2, \cdots, \boldsymbol{c}_N\right] $
\Output CG atoms $\mathcal{R}_{\text{atm}}$, omitted atoms $\mathcal{C}_{\text{omit}}$, CG mask $\mathcal{M}=\left[m_1, m_2, \cdots, m_N\right]$
\State CG atom ratio $r \sim \text{Uniform}(0,1)$
\For{$i=1$ to N}:
    \If{atom $i$ is a $C_\alpha$}
    \State$\mathcal{C}_i \in \mathcal{R}_{\text{atm}}$
        \State $m_i = 0$
    \Else 
    \State $p_i \sim \text{Uniform}(0,1)$
    \If{$p_i > r$}
        \State $\mathcal{C}_i \in \mathcal{C}_{\text{omit}}$
        \State $m_i = 1$
    \Else
        \State $\mathcal{C}_i \in \mathcal{R}_{\text{atm}}$
        \State $m_i = 0$
    \EndIf
    \EndIf
\EndFor
\label{algo:random_strat}
\end{algorithmic}
\end{algorithm}

\begin{algorithm}[]
\begin{algorithmic}[1]
\caption{CG atoms choice: semi-random strategy}
\Input a training sample with N heavy atoms: $\mathcal{C} =\left[\boldsymbol{c}_1, \boldsymbol{c}_2, \cdots, \boldsymbol{c}_N\right]$, a pre-defined training ratio $r=[r_1,r_2,...,r_N]$
\Output CG atoms $\mathcal{R}_{\text{atm}}$, omitted atoms $\mathcal{C}_{\text{omit}}$, CG mask $\mathcal{M}=\left[m_1, m_2, \cdots, m_N\right]$
\State CG atom ratio $r \sim \text{Uniform}(0,1)$
\For{$i=1$ to N}:
    \State $p_i \sim \text{Uniform}(0,1)$
    \If{$p_i > r_i$}
        \State $\mathcal{C}_i \in \mathcal{C}_{\text{omit}}$
        \State $m_i = 1$
    \Else
        \State $\mathcal{C}_i \in \mathcal{R}_{\text{atm}}$
        \State $m_i = 0$
    \EndIf
\EndFor
\label{algo:empiri_strat}
\end{algorithmic}
\end{algorithm}

\begin{table}[h!]
\begin{center}
\begin{tabular}{ccccccc}

\toprule
& $C_\alpha$ & $N$ & $C$ &$O$&$C_\beta$&Other\\
\midrule

r & $1$ & $0.6$ & $0.6$ &$0.4$&$0.4$&$0.05$\\

\bottomrule
\end{tabular}
\end{center}
\caption{The training ratio of each atom type. Atoms with the same atom types will have the same training ratio.}
\label{table:cg_ratio}
\end{table}
\section{CG MAPPING PROTOCOLS}
\label{sec:CG_prot_si}
In this section, we briefly introduce the three CG methods used for backmapping experiments in this paper. These CG models are designed from a mixing of knowledge-based and physics-based potentials and have been successfully applied in studying ab initio protein structure prediction, protein folding and binding, and extended to even larger systems like protein-DNA interactions. The CG mapping protocol of each method will vary from systems. In this paper, we take the general form of each protocol, summarized in Table \ref{table:CG_prot}. Among the three chosen CG methods, MARTINI has the highest CG resolutions: roughly four sidechain heavy atoms represented by one CG atom and two heavy atoms on the ring-like structure represented by one CG atom.

\begin{table}[h!]
\begin{center}
\begin{tabular}{ccc}
\toprule
& $R_{\text{atm}}$ & $R_{\text{aux}}$\\
\midrule
MARTINI & $C_\alpha$ & Up to Four side chain COM beads \\
Rosetta & $C_\alpha, C, N, O $ & side chain COM \\
UNRES & $C_\alpha, N$ & side chain COM \\
\bottomrule
\end{tabular}
\end{center}
\caption{The CG mapping protocol of three CG methods used in this paper.}
\label{table:CG_prot}
\end{table}
\section{MODIFIED TORSIONAL DIFFUSION}
In this section, we briefly introduce Torsional Diffusion. Torsional Diffusion is a diffusion framework operating on the space of torsion angles. Torsion angles describe the rotation of bonds within a molecule. It lies in $[0,2 \pi)$, and a set of $m$ torsion angles define a hypertorous space $\mathbb{T}^{m}$. The theory behind score-based diffusion holds for compact Riemannian manifolds, with subtle modifications. For $\boldsymbol{\tau} \in M$, where $\boldsymbol{\tau}$ represents the torsion angles and $M$ is Riemannian manifold, the prior distribution $p_{T}(\mathbf{x})$ is a uniform distribution over $M$.  We choose VE-SDE as our forward diffusion, with $\mathbf{f}(\boldsymbol{\tau},t) = 0,g(t)=\sqrt{\frac{d}{d t} \sigma^2(t)}$, where $\sigma (t)$ represents the noise scale. We use an exponential diffusion $\sigma(t)=\sigma_{\min }^{1-t} \sigma_{\max }^t$, with $\sigma_{\min }=0.01 \pi, \sigma_{\max }=\pi, t \in(0,1)$. As shown in  \eqref{sde_loss}, training a denoising score matching model requires sampling from the perturbation kernel $p(\boldsymbol{\tau}(t)|\boldsymbol{\tau}(0))$. We consider the perturbation kernel on $\mathbb{T}^{m}$ with wrapped normal distribution:
\begin{equation}
\begin{aligned}
  p(\boldsymbol{\tau}(t)|\boldsymbol{\tau}(0)) &\propto \sum_{\mathbf{d} \in \mathbb{Z}^m} \exp \left(-\frac{\left\|\boldsymbol{\tau}(0)-\boldsymbol{\tau}(t)+2 \pi \mathbf{d}\right\|^2}{2 \sigma^2(t)} \right),
\end{aligned}
\end{equation}
and the other terms in the loss function \eqref{sde_loss}
remain unchanged. 
\par The sampling process of Torsional Diffusion is also similar to normal diffusion models with little changes: we draw samples from a uniform distribution as prior on torus space, and then discretize and solve the reverse diffusion via a geodesic random walk. We implement the model as a Torsional Diffusion conditioned on CG variables. The sampling procedure of the modified Torsional Diffusion is shown in the pseudo-code in Algorithm. \ref{algo:TD_sampling}.

\begin{algorithm}[h]
\begin{algorithmic}[1]
\caption{Modified Torsional Diffusion sampling}
\Input protein molecular graph $\mathcal{G}$, diffusion steps $T$, CG atoms $\mathcal{R}_{\text{atm}}$, auxiliary variables $\mathcal{R}_{\text{aux}}$ (including bond lengths $l$ and bond angles $\omega$)
\Output predicted conformers $\mathcal{C}$
\State $\boldsymbol{\tau}_{T} \sim U(0,2\pi)^{m}$
\For{$i=T-1$ to 0}
    \State let $t = i / T,g(t)=\sigma_{\min }^{1-t} \sigma_{\max }^t \sqrt{2 \ln \left(\sigma_{\max } / \sigma_{\min }\right)}$
    \State Obtain noisy configuration $\mathcal{C}_i$ from $\boldsymbol{\tau}_{i}$, $\mathcal{R}_{\text{atm}}, l, \omega$
    \State $\hat{\mathbf{s}} \leftarrow \mathbf{s}_{\theta,\mathcal{G}}\left(\mathcal{C}_i,t\right)$
    \State $\boldsymbol{z} \sim$ wrapped normal with $\sigma^2 = 1 / T$
    \State $\boldsymbol{\tau}_{i-1}^{\prime} = \boldsymbol{\tau}_i + \left(g^2(t) / N\right)\hat{\mathbf{s}}$
    \State $\boldsymbol{\tau}_{i-1} = \boldsymbol{\tau}_{i-1}^{\prime} + g(t)\boldsymbol{z}$
\EndFor
\State Obtain $\mathcal{C}$ from $\boldsymbol{\tau}_{0}^{\prime}$,$\mathcal{R}_{\text{atm}}, l, \omega$
\label{algo:TD_sampling}
\end{algorithmic}
\end{algorithm}
\section{EVALUATION METRICS}
\label{app:metrics}
\textbf{Root Mean Squared Distance (RMSD)} Root Mean Square Deviation (RMSD) is a commonly used measure in structural biology to quantify the difference between two protein structures. It's particularly useful for comparing the similarity of protein three-dimensional structures. The RMSD is calculated by taking the square root of the average of the square of the distances between the atoms of two superimposed proteins: 
\begin{equation}
\mathrm{RMSD}=\min_{T_g\in \mathrm{SE}(3)}
\sqrt{\frac{1}{N} \sum_{i=1}^N\left\|T_g(\mathbf{r}_i)-\mathbf{r}_i^{\mathrm{ref}}\right\|^2}
\end{equation}, where $N$ is the number of atoms in the protein, and $\mathbf{r}_i$ 
and $\mathbf{r}_i^{\mathrm{ref}}$ are positions of the $i-$th equivalent atoms of two structures being compared. A lower RMSD of a generated configuration indicates more similarity to the original all-atom configuration.

\par \textbf{Generative diversity score (DIV)} RMSD can be a confusing metric when evaluating the diversity of the generated samples. The main reason lies in that a high RMSD can simultaneously indicate high diversity and low accuracy. As suggested by \cite{jones2023diamondback}, the average pairwise RMSDs between (1) generated samples and the original reference and (2) between all generated samples should be approximately equal. Following this idea, a generative diversity score DIV is defined as:
\begin{equation}
\begin{aligned}
\mathrm{RMSD}_{\mathrm{ref}} & =\frac{1}{N} \sum_i^N 
\operatorname{RMSD}\left(\mathcal{C}_i^{\mathrm{gen}}, 
\mathcal{C}^{\mathrm{ref}}\right) \\
\mathrm{RMSD}_{\mathrm{gen}} & =\frac{2}{N(N-1)} \sum_i^N \sum_j^{(i-1)} 
\operatorname{RMSD}\left(\mathcal{C}_i^{\mathrm{gen}}, 
\mathcal{C}_j^{\mathrm{gen}}\right) \\
\mathrm{DIV} & =1-\frac{\mathrm{RMSD}_{\mathrm{gen}}}{\mathrm{RMSD}_{\text{ref}}},
\end{aligned}
\end{equation}
where N is the number of generated samples conditioned on a single CG configuration. DIV approximately lies on the interval $[0,1]$. A deterministic backmapping (all generated samples are the same) will have $\mathrm{DIV} = 1$, indicating no diversity. On the contrary, $\mathrm{DIV} \approx 0$ is achieved when $\mathrm{RMSD}_{\mathrm{ref}} \approx \mathrm{RMSD}_{\mathrm{gen}}$, which indicates $\mathcal{C}^{\mathrm{ref}}$ and $\mathcal{C}_i^{\mathrm{gen}}$ shares a 
similiar distribution. In this case, the backmapping algorithm generates diverse all-atom configurations following a correct probability distribution. Overall this metric can indicate diversity well and avoid giving high diversity scores (low DIV) to models that generate totally off configurations.
\par \textbf{Steric clash ratio} 
A steric clash in protein structures refers to a situation where atoms are positioned too close to each other, leading to overlapping electron clouds. This results in an energetically unfavorable state, as it violates the principles of van der Waals radii and can destabilize the protein structure. Following GenZProt (\cite{yang2023chemically}), we report the ratio of steric clash occurrence in all atom-atom pairs within 5.0 \AA, where the steric clash is defined as an atom-atom pair with a distance smaller than 1.2 \AA.

\section{EXPERIMENT DETAILS}

\subsection{Model Architecture}
\label{sec:architecture}
Graph Neural Network (GNN) has been widely applied in molecular conformation prediction problems. In this paper, we adopt the equivariant GNN, and more specifically, e3nn library as our GNN architecture to parameterize the conditional score function $\mathbf{s}_{\theta}$. Following \cite{batzner20223}, we denote each node $a$ with node representations $V_{acm}^{k,l,p}$, where $k$ represents the message-passing layer number, $l$ represents the rotation order, $p \in [-1,1]$ represents the parity, with $p = 1$ representing even parity (invariant under parity), and $p = -1$ representing odd parity (equivariant under parity). 
\par In this study, we denote the choice of CG atoms with an observation mask $\mathcal{M} = \{n_1,...,n_N\} \in \{0,1\}^{N}$, with $n_a = 0$ representing the $a$-th atom is a CG atom and $n_a = 1$ representing the $a$-th atom is an omitted atom. We then have each protein configuration data input expressed as $\{\mathcal{D}, \mathcal{R}_{\text{atm}}, \mathcal{M}, \mathcal{G}\}$. Each node in the graph $\mathcal{G}$ is represented as $v_{a} = \{n_a,t_a\}$, where $n_a$ is a learnable atom type embedding fixed for a given atomic number and $t_a$ is a learnable amino acid type embedding fixed for a given amino acid. Each edge in the graph is represented as $e_{ab} = \{v_a + v_b, s_{ab},\mu(d_{ab}),t_\text{GRF}\}$, where $s_{ab}$ is a learnable bond type embedding for a given bond type, $\mu(d_{ab})$ is the radial basis representation of distance between node $a$ and node $b$, and $t_{\text{GRF}} = \{\sin{2\pi \omega t},\cos{2\pi \omega t}\}$ represents the diffusion time information with Gaussian random features. Given the protein configuration data input $\{\mathcal{D}, \mathcal{R}_{\text{atm}}, \mathcal{M}, \mathcal{G}\}$ and the diffusion time $t$, we first embed node and edge attributes into higher dimensional feature spaces using feedforward networks:
\begin{equation}
    \begin{split}
        V_a^{0,0,1} &= \text{MLP}(v_a) \quad \forall v_a \in \mathcal{V},\\
        \mathbf{h}_{e_{ab}} &= \text{MLP}(e_{ab}) \quad \forall e_{a b} \in \mathcal{E}.
    \end{split}
\end{equation}
The message-passing layers are based on E(3) equivariant convolution from \cite{batzner20223}, \cite{jing2022torsional}. At each layer, messages passing between two paired nodes are constructed using tensor products of nodes' irreducible representation with the spherical harmonic of edge vectors. The messages are weighted by a learnable function that takes in the scalar representations ($l = 0$) of two nodes and edges. Finally, the tensor product is computed via contract with the Clebsch-Gordan coefficients. At the message-passing layer $k$, for the node $a$, its rotation order $l_0$, and output dimension $c'$, the message-passing layer is expressed as: 

\begin{equation}
\begin{aligned}
\label{eqn:mp_layer}
V_{a c^{\prime} m_o}^{\left(k, l_o, p_o\right)}= & \sum_{l_f, l_i, p_i} \sum_{m_f, m_i} C_{\left(l_i, m_i\right)\left(l_f, m_f\right)}^{\left(l_o, m_o\right)} \frac{1}{\left|\mathcal{N}_a\right|} \sum_{b \in \mathcal{N}_a} \sum_c \psi_{a b c}^{\left(k, l_o, l_f, l_i, p_i\right)} Y_{m_f}^{\left(l_f\right)}\left(\hat{r}_{a b}\right) V_{b c m_i}^{\left(k-1, l_i, p_i\right)},
\end{aligned}
\end{equation}
where the tensor product between the input feature of rotation order $l_i$ and spherical harmonics of order $l_f$ generates irreducible representations of output orders $\left|l_i-l_f\right| \leq l_o \leq\left|l_i+l_f\right|$, $(-1)^{l_f} p_i=p_o$, $C$ represents the Clebsch-Gordan coefficients, $\mathcal{N}_a=\left\{b \mid \forall e_{a b} \in \mathcal{E} \right\}$ represents the neighboring nodes of node $a$, $Y$ represents the spherical harmonics, and 
\begin{equation}
    \psi_{a b c}^{\left(k, l_o, l_f, l_i, p_i\right)}=\Psi_c^{\left(k, l_o, l_f, l_i, p_i\right)}\left(\mathbf{h}_{e_{ab}}\left\|V_a^{(k-1,0,1)}\right\| V_b^{(k-1,0,1)}\right)
\end{equation}
is the weight function using feedforward networks that take in the scalar representations of two nodes and the edge embeddings. In this paper, the rotational order of nodes ($l_0, l_i)$ and spherical harmonics ($l_f$) are below 3. 
\par After $L$ layers of message-passing, the node feature becomes $V_a = (V^{(L,0,p)} \in \mathbb{R}^{c},V^{(L,1,p)} \in \mathbb{R}^{3c},V^{(L,2,p)} \in \mathbb{R}^{5c})$. We parameterize the time-dependent score function $\mathbf{s}_{{\theta}}(\mathcal{D}(t), t|\mathcal{R}_{\text{atm}})$ with rotational and parity equivariant feature $V_a^{(L,1,-1)}$:
\begin{equation}
   \mathbf{s}_{{\theta}} = [V_a^{(L,1,-1)}:n_a = 1].
\end{equation}
\subsection{Hyperparameters}
In this section, we introduce the details of our experiments. The score function $\mathbf{s}_{\theta}$ is parameterized by the equivariant GNN presented in Sec. \ref{sec:architecture}. The atom type embedding $n_a$ has an embedding size of 4 and the amino acid type embedding $t_a$ has an embedding size of 8. The bond type embedding $s_{ab}$, which denotes if an edge represents a bonded or nonbonded interaction, has an embedding size of 2. In the initial embedding step, node and edge features are embedded into a latent dimension of 32. 8 message-passing layers as in \eqref{eqn:mp_layer} are used. The final 3-dimensional rotational and parity equivariant output features of each omitted atom are concatenated as the final predicted score. For the hyperparameters of the VP-SDE, we choose $\beta_1 = 1.0\times 10^{-7}, \beta_{T} = 1.0\times 10^{-3}$, with a sigmoid $\beta$ scheduler and diffusion step numbers $T = 10000$. BackDiff is trained on a single NVIDIA-A10 GPU until convergence, with a training time of around 24 hours and ADAM as the optimizer, with 64 batch size. 
\subsection{Choice of the correction weight}
An important hyperparameter in the manifold constraint sampling is the correction term weight $\zeta$. We should expect that a too-low weight will lead to inconsistency with the conditions and an overly-high weight will make the sampling path noisy. Following \cite{chung2022diffusion}, we set $\zeta_i = \zeta_i^{\prime}  / \left\|\mathcal{R}_{\text{aux}}-\xi_{\textbf{aux}}\left(\hat{\mathcal{D}}_0,\mathcal{R}_{\text{atm}}\right)\right\|$, with $\zeta_i^{\prime} = 0.5$ yielding the optimized sampling quality. An ablation study on the influence of correction weight is summarized in Table \ref{table:cons_hyper}. 
\begin{table}[h!]
\begin{center}
\begin{tabular}{C{2.5cm}cccc}
\toprule & $\zeta_i^{\prime}$ & PED00011 & PED00055 & PED00151  \\
\midrule Bond length MAE ($\text{\AA}$) & $0.5$ & $\mathbf{<0.001}$ & $\mathbf{<0.001}$ & $\mathbf{<0.001}$ \\
& $0.01$ & $0.003(<0.001)$ & $0.007 (0.002)$ &$0.004(0.001)$ \\
& $500$ & $\mathbf{<0.001}$ & $\mathbf{<0.001}$ & $\mathbf{<0.001}$ \\
\midrule Bond angle MAE& $0.5$ &$0.167 ( 0.095)$&  $0.106 ( 0.088)$& $0.124 ( 0.097)$\\
& $0.01$ & $0.293( 0.164)$& $0.176 ( 0.150)$& $0.194( 0.123)$\\
& $500$ & $\mathbf{0.099( 0.003)}$& $\mathbf{0.078 ( 0.004)}$& $\mathbf{0.065( 0.002)}$\\
\midrule SCR ($\%$)& $0.5$  & $\mathbf{0.918 ( 0.609)}$  & $\mathbf{0.786 ( 0.335)}$ &$\mathbf{0.820( 0.316)}$ \\
& $0.01$ & $2.485( 0.743)$& $2.20 1( 0.469)$&$2.093(0.554)$\\
& $500$ & $1.966(0.451)$& $1.835 ( 0.644)$&$1.752(0.340)$\\
\bottomrule
\end{tabular}
\end{center}
\caption{Ablation study on different correction weights.}
\label{table:cons_hyper}
\end{table}
From the table, we can see that the proposed correction weights produce the best result. Although a higher correction weight can offer stronger manifold constraints, leading to a smaller bond length and bond angle error, it over-deviates the sampling path and thus generates samples at low probability space.

\section{ADDITIONAL EXPERIMENTAL RESULTS}
We present the multi-protein backmapping results for Rosetta CG model in Table \ref{table:mult_exp_Rosetta} and MARTINI CG model in Table \ref{table:mult_exp_MARTINI}. Note that the CG-transferable BackDiff model is not retrained for the two new experiments. The results further demonstrate BackDiff's enhanced accuracy and transferability. Notably, in the experiments with the MARTINI CG model, which features a higher dimensionality of CG auxiliary variables, BackDiff achieves superior backmapping results compared to its performance with the other two CG models (UNRES and Rosetta). On the other hand, baseline methods like GenZProt and Torsional Diffusion deliver similar or less impressive results with the MARTINI CG model than with UNRES and Rosetta. This indicates that BackDiff can harness the benefits of CG models with a richer set of auxiliary variables, a capability not apparent in the other methods.  Additionally, we evaluate the sidechain torsion angle distribution of ground truth and sampled configurations from different methods. As shown in Figure \ref{fig:dihedral}, \ref{fig:dihedral_2} and \ref{fig:dihedral_3}, BackDiff aligns closer to the ground truth distributions, even though torsion angles aren't its primary training objective.
\begin{table}[h!]
\begin{center}
\begin{tabular}{C{2.5cm}cccc}
\toprule & Method & PED00011 & PED00055 & PED00151  \\
\midrule & \textbf{BackDiff (fixed)} & $\mathbf{0.616( 0.201})$& $1.587 ( 0.359)$&$\mathbf{1.287( 0.163)}$\\
& BackDiff (trans) & $0.751( 0.222)$& $\mathbf{1.344 ( 0.275})$&$1.410( 0.197)$\\
$\text{RMSD}_\text{min} ($\text{\AA}$)$ & GenZProt & $2.245 ( 0.430)$& $2.568 ( 0.496)$&$2.661( 0.325)$\\
& TD & $1.599 ( 0.357)$&$2.003 ( 0.376)$&$1.458( 0.256)$\\
\midrule & \textbf{BackDiff (fixed)}  &$\mathbf{0.611 ( 0.456)}$&  $\mathbf{0.784 ( 0.529)}$& $\mathbf{0.463 ( 0.268)}$\\
& BackDiff (trans) & $0.923( 0.647)$& $0.792 ( 0.475)$&$0.820( 0.316)$  \\
SCR ($\%$) & GenZProt & $2.215 ( 1.237)$&$2.192 ( 0.673)$& $1.545( 0.602)$ \\
& TD & $1.034 ( 0.499)$&$1.205 ( 0.471)$&  $0.772( 0.315)$\\
\midrule & \textbf{BackDiff (fixed)}  & $\mathbf{0.068 ( 0.010)}$& $\mathbf{0.097( 0.020)}$&$0.119 ( 0.015)$\\
& BackDiff (trans) & $0.075( 0.023)$& $0.104 ( 0.021)$& $\mathbf{0.111( 0.044)}$\\
SCMSE$_{\text{min}}$ ($\text{\AA}^2$) & GenZProt & $1.787(0.289)$& $1.704(0.368)$& $1.633(0.301)$\\
& TD & $1.108 (0.309)$&$0.946(0.247)$&$1.513( 0.350)$\\
\midrule & BackDiff (fixed) & $0.139(0.056)$& $0.261(0.115)$&$0.200(0.079)$\\
& \textbf{BackDiff (trans)} & $\mathbf{0.084(0.060)}$& $\mathbf{0.155(0.067)}$&$\mathbf{0.108(0.058)}$\\
$\text{DIV}$& GenZProt & $0.625(0.117)$& $0.637(0.132)$&$0.604(0.136)$\\
& TD & $0.184(0.061)$& $0.271(0.091)$& $0.205(0.081)$\\
\bottomrule
\end{tabular}
\end{center}
\caption{Results on multi-protein experiments backmapping from Rosetta CG model.}
\label{table:mult_exp_Rosetta}
\end{table}

\begin{table}[h!]
\begin{center}
\begin{tabular}{C{2.5cm}cccc}
\toprule & Method & PED00011 & PED00055 & PED00151  \\
\midrule & \textbf{BackDiff (fixed)} & $\mathbf{0.415( 0.156})$& $1.012 ( 0.208)$&$\mathbf{0.827( 0.141)}$\\
& BackDiff (trans) & $0.517( 0.182)$& $\mathbf{0.827 ( 0.174})$&$0.957( 0.196)$\\
$\text{RMSD}_\text{min} ($\text{\AA}$)$ & GenZProt & $2.993 ( 0.526)$& $3.015( 0.728)$&$2.982( 0.552)$\\
& TD & $1.969 ( 0.527)$&$2.493 ( 0.643)$&$1.738( 0.216)$\\
\midrule & \textbf{BackDiff (fixed)}  &$\mathbf{0.314( 0.232)}$&  $\mathbf{0.629 ( 0.512)}$& $\mathbf{0.227 ( 0.135)}$\\
& BackDiff (trans) & $0.536( 0.478)$& $0.701 ( 0.435)$&$0.520( 0.393)$\\
SCR ($\%$) & GenZProt & $2.759 ( 0.988)$&$3.000 ( 0.672)$& $1.894( 0.433)$\\
& TD & $1.103 ( 0.570)$&$1.741  ( 0.513)$&  $1.450( 0.513)$\\
\midrule & \textbf{BackDiff (fixed)}  & $0.035( 0.008)$& $\mathbf{0.030( 0.005)}$&$\mathbf{0.028 ( 0.005)}$\\
& BackDiff (trans) & $\mathbf{0.030( 0.006)}$& $0.034( 0.007)$& $0.040( 0.015)$\\
SCMSE$_{\text{min}}$ ($\text{\AA}^2$) & GenZProt & $2.307(0.378)$& $2.145 (0.488)$& $2.389(0.404)$\\
& TD & $1.302 (0.284)$&$1.436 (0.527)$&$1.784( 0.496)$\\
\midrule & BackDiff (fixed) & $0.205(0.050)$& $0.325(0.087)$&$0.198(0.074)$\\
& \textbf{BackDiff (trans)} & $\mathbf{0.147(0.072)}$& $\mathbf{0.169(0.078)}$&$\mathbf{0.152(0.063)}$\\
$\text{DIV}$& GenZProt & $0.674(0.130)$& $0.691(0.115)$&$0.640(0.128)$\\
& TD & $0.282(0.056)$& $0.326(0.075)$& $0.233(0.059)$\\
\bottomrule
\end{tabular}
\end{center}
\caption{Results on multi-protein experiments backmapping from MARTINI CG model.}
\label{table:mult_exp_MARTINI}
\end{table}

\begin{figure*}[h]
    \centering
    \includegraphics[width=5in]{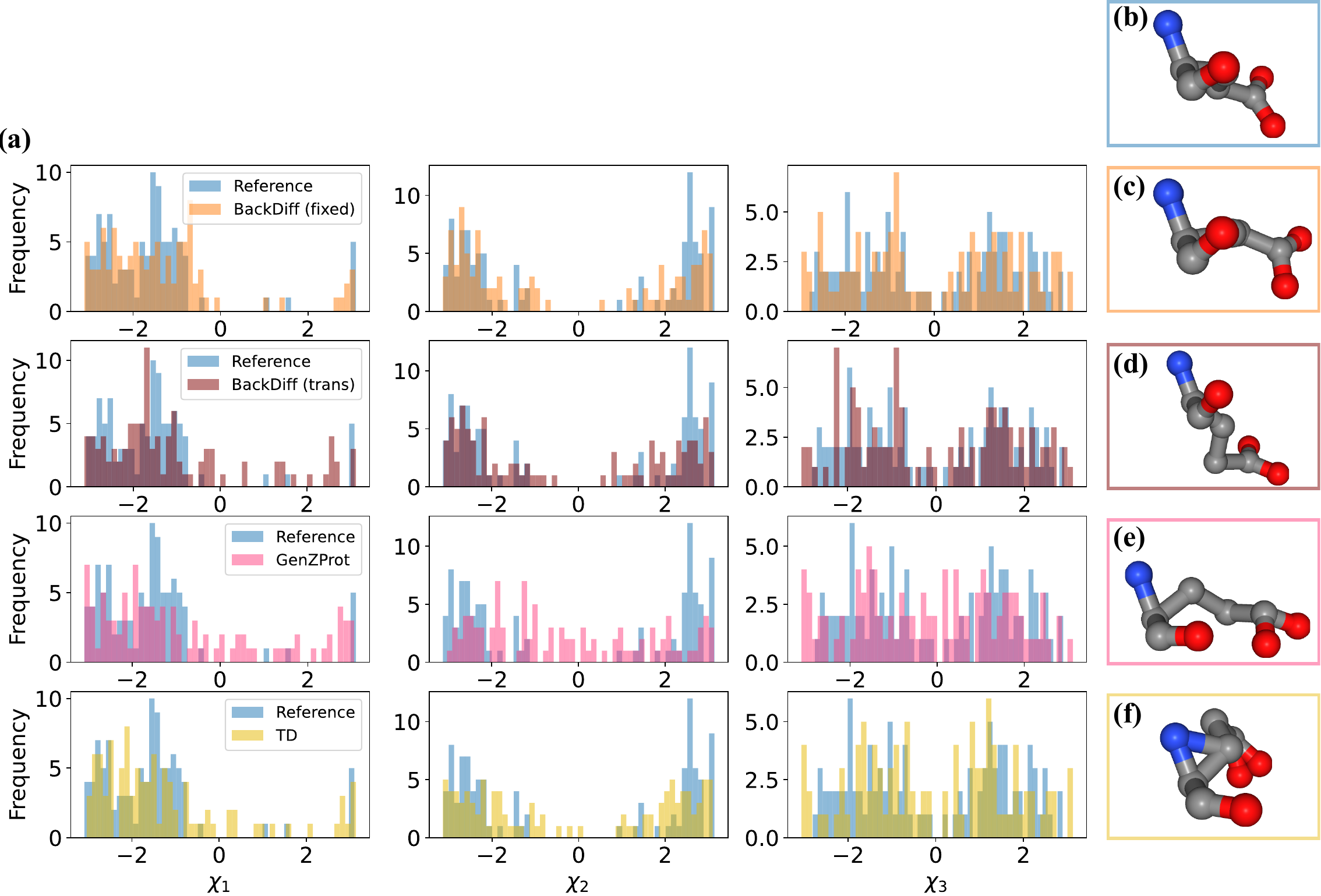}
    \caption{Multi-protein experiments backmapping from the UNRES CG model showing results on residue 7 of PED00011, a Glutamine (GLU) amino acid residue: (a) Histogram of sidechain torsion angles of ground truth and samples generated from four models, (b)-(f): the sidechain configurations visualization from (b) reference (c) fixed CG BackDiff (d) transferable CG BackDiff (e) GenZProt (f) Torsional Diffusion.}
    \label{fig:dihedral}
\end{figure*}

\begin{figure*}[h]
    \centering
    \includegraphics[width=5.5in]{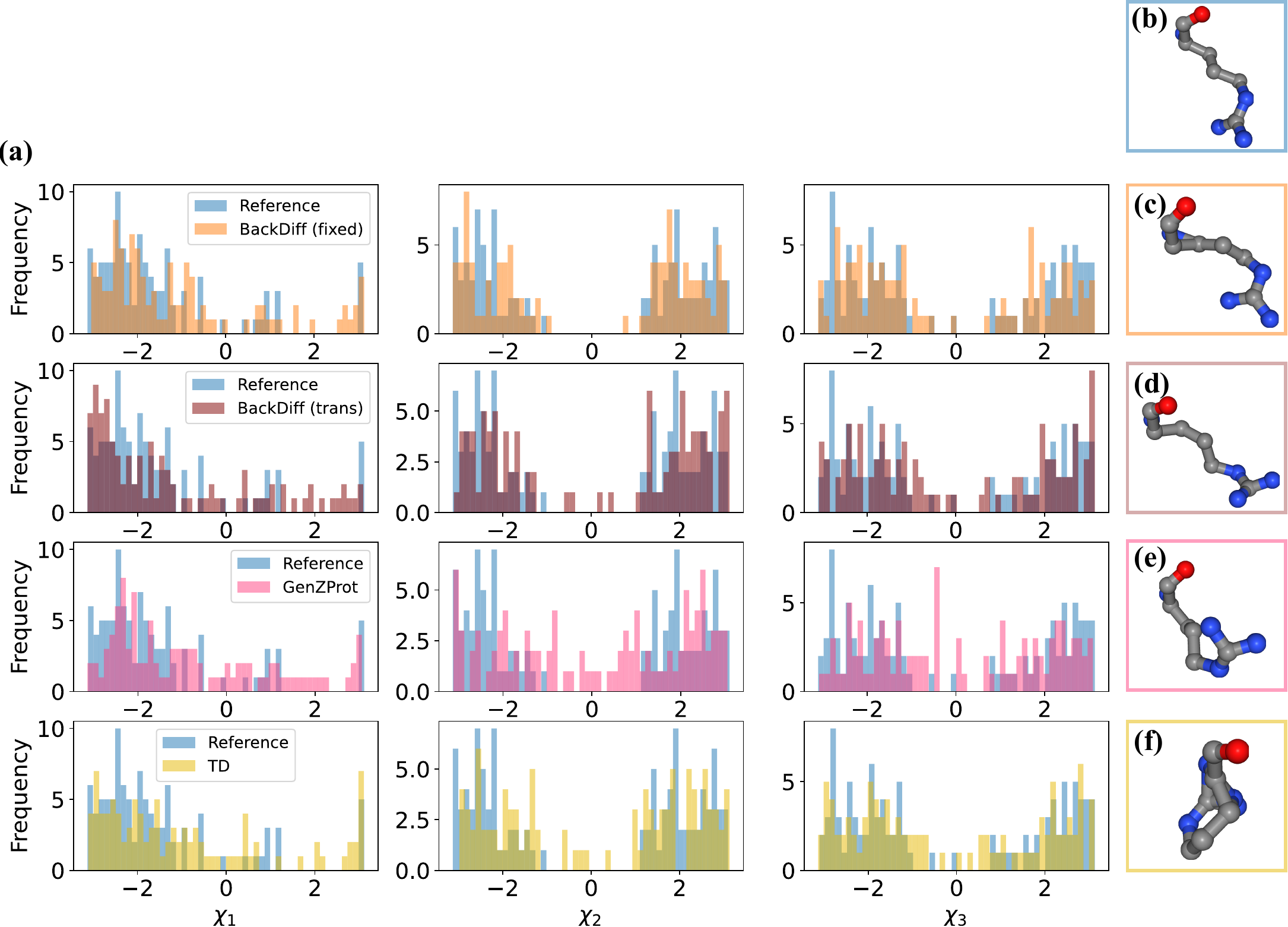}
    \caption{Multi-protein experiments backmapping from the UNRES CG model showing results on residue 8 of PED00011, an Arginine (ARG) amino acid residue.}
    \label{fig:dihedral_2}
\end{figure*}

\begin{figure*}[h]
    \centering
    \includegraphics[width=5.5in]{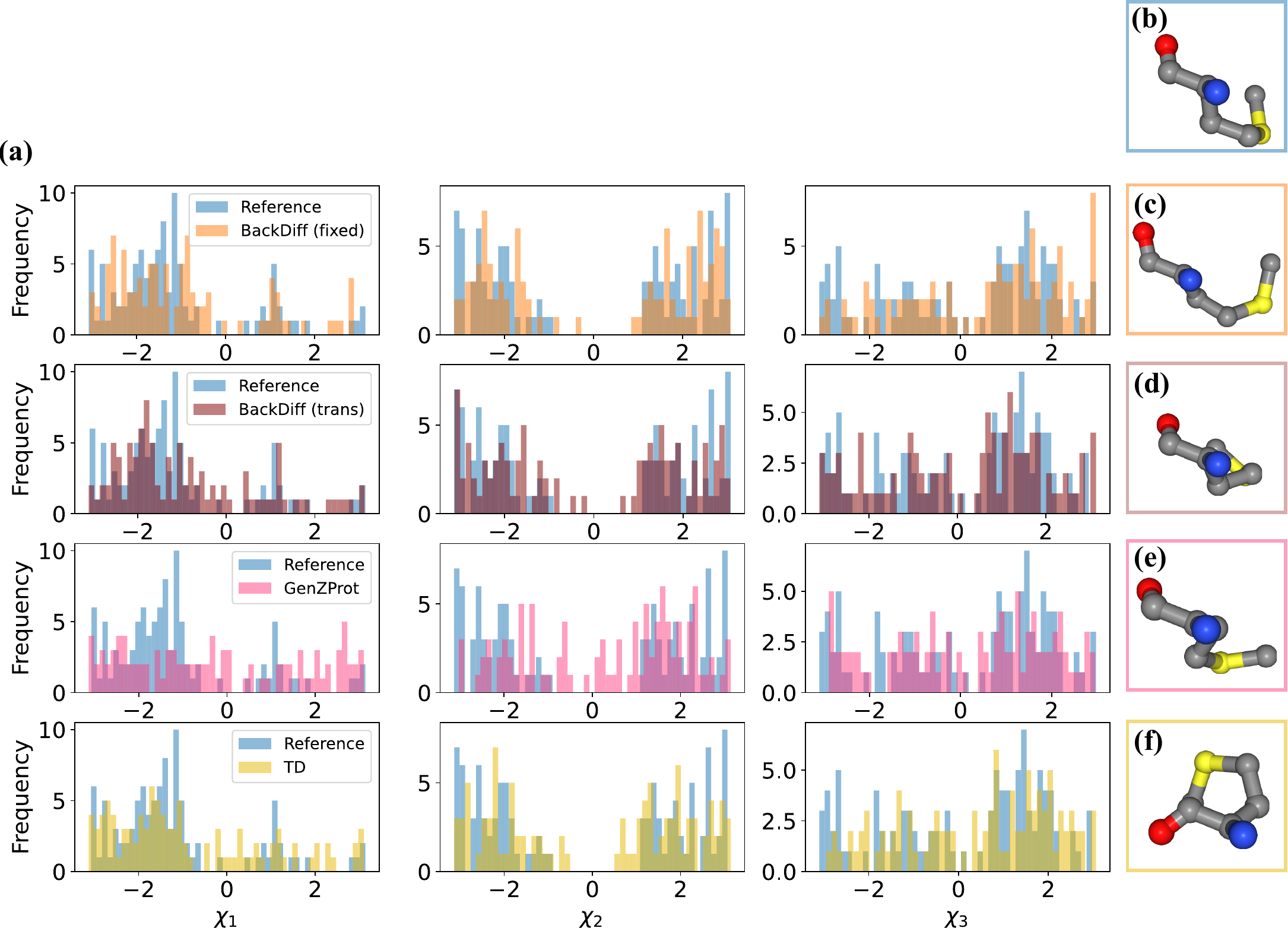}
    \caption{Multi-protein experiments backmapping from the UNRES CG model showing results on residue 27 of PED0001, a Methionine (MET) amino acid residue.}
    \label{fig:dihedral_3}
\end{figure*}
\section{LIMITATIONS OF BACKDIFF}
As shown in Sec. \ref{sec:experiment}, BackDiff significantly improves the protein backmapping accuracy. However, BackDiff has a number of limitations.
\par
\textbf{Bond lengths and bond angles} A primary drawback of BackDiff, in comparison to internal-coordinate-based generative models, is its susceptibility to producing unrealistic bond lengths and angles, even with manifold constraint sampling. This inaccuracy is notably prominent in bond angles possibly because of their nonlinear mappings from Cartesian coordinates. On the other hand, internal-coordinate-based models inherently avoid such issues by constructing geometries from predefined, reasonable bond lengths and angles. Future work will focus on refining these nonlinear manifold constraints to reduce errors in bond angles and other nonlinear CG auxiliary variables.
\par 
\textbf{Sampling efficiency}
A notable limitation of diffusion models is their slower sampling efficiency. Compared to other generative models like Variational Autoencoders (VAE) and Normalizing Flows (NF), which often achieve generation in a single step, diffusion models require hundreds to thousands of reverse diffusion steps for effective sampling. This demand is even more pronounced for manifold constraint sampling, where fewer diffusion steps might not sufficiently constrain the conditions. In BackDiff, generating 100 samples per frame requires an average of 293 seconds, whereas for GenZProt (a VAE-based method) takes an average of 0.009 seconds. Improving the sampling efficiency of both diffusion models and manifold constraint sampling presents a compelling direction for future research.
\par
\textbf{Training data quality}
An optimal training dataset for BackDiff would encompass data from tens of thousands of proteins, all simulated under a unified force field. Such a dataset would ensure comprehensive coverage of the protein space and minimize inconsistencies in data quality. In contrast, our current dataset comprises a mere 92 proteins, sourced from varied simulations and sampling methodologies. Such diversity in data origins may compromise the model's broader applicability across protein spaces. Moving forward, our goal is to integrate a more expansive and consistent set of high-quality protein simulation data, enhancing the robustness and performance of BackDiff.
\par 
\textbf{Chirality of proteins} Proteins are made up of amino acids, most of which are chiral. This means they exist in two forms (enantiomers) that are mirror images of each other but cannot be superimposed. In nature, almost all amino acids in proteins are in the L-form (left-handed). This chirality is crucial for the structure and function of proteins. Performing a parity transformation on the protein will change left-handed coordinate systems into right-handed ones. Our model does not take care of the chirality and simply assumes parity equivariant: $p(\mathcal{C}|\mathcal{R}_{\text{atm}},\mathcal{G}) = p(-\mathcal{C}|-\mathcal{R_{\text{atm}}},\mathcal{G})$. This can be a point for improvement.
\end{document}